\pdfoutput=1

\documentclass[a4paper,DIV=12]{scrartcl}

\usepackage{ifthen}

\ifthenelse{\isundefined{\development}}{
}{
  \overfullrule=1mm
}

\usepackage[T1]{fontenc}
\usepackage[utf8]{inputenc}

\usepackage{amsthm}

\usepackage{microtype}
\usepackage{etoolbox}

\usepackage{libertine}
\usepackage[libertine]{newtxmath}
\usepackage{amsmath}

\usepackage[numbers]{natbib}
\usepackage{doi}

\usepackage{authblk}

\usepackage{bussproofs} 
\usepackage{empheq}
\usepackage{proof}
\usepackage[f]{esvect}
\usepackage{mathtools}
\usepackage{hhline}
\usepackage{tikz}
\usetikzlibrary{positioning}
\usepackage{tikz-cd}
\usepackage{tikz-qtree}
\usepackage{listings}
\usepackage{comment}
\usepackage{soul}
\usepackage{cancel}
\usepackage{calc}
\usepackage[inline]{enumitem}
\usepackage{color}


\newenvironment{lemmarep}{\begin{lemma}}{\end{lemma}}
\newenvironment{theoremrep}{\begin{theorem}}{\end{theorem}}
\newenvironment{propositionrep}{\begin{proposition}}{\end{proposition}}
\newenvironment{appendixproof}{\begin{proof}}{\end{proof}}

\newenvironment{toappendix}{}{}

\usepackage{local-macros}
\usepackage{references}
\usepackage{thm}

\ifthenelse{\isundefined{\development}}{
  \usepackage[disable]{todonotes}
}{
  \usepackage[textwidth=1.8cm,textsize=footnotesize,colorinlistoftodos]{todonotes}
  \setlength{\marginparwidth}{1.8cm}
  \usepackage[normalem]{ulem}
  \usepackage[notref,notcite]{showkeys}
  \renewcommand*{\showkeyslabelformat}[1]{%
    \fbox{\parbox[t]{1.6cm}{\raggedright\normalfont\tiny{\url{#1}}}}}
  \usepackage{gitinfo}
}

\usepackage{hyperref}


\newcommand{\knote}[2][]{\todo[inline, color=blue!10   ,#1]{#2}}
\newcommand{\hnote}[2][]{\todo[inline, color=yellow!10 ,#1]{#2}}


\specialcomment{longproof}{\begin{proof}}{\end{proof}}
\specialcomment{shortproof}{\begin{proof}}{\end{proof}}
\ifthenelse{\isundefined{\showlongproofs}}{
  \excludecomment{longproof}
}{
  \excludecomment{shortproof}
}


\begin{document}

\title{Coinduction in Uniform}
\subtitle{Foundations for Corecursive Proof Search with Horn Clauses}

\newcommand{\email}[1]{%
  \textit{\href{mailto:#1}{#1}}%
}
\newcommand{\orcid}[1]{%
  \textit{\href{https://orcid.org/#1}{#1}}%
}

\author{Henning Basold\thanks{This work is supported by the European Research
    Council (ERC) under the EU’s Horizon 2020 programme~(CoVeCe, grant
    agreement No. 678157)}}
\affil{CNRS, ENS Lyon
  (email \email{henning.basold@ens-lyon.fr}, orcid \orcid{0000-0001-7610-8331})}

\author{Ekaterina Komendantskaya\thanks{This work is supported by EPSRC research
    grant EP/N014758/1.}}
\affil{Heriot-Watt University
  (email \email{ek19@hw.ac.uk})}

\author{Yue Li}
\affil{Heriot-Watt University
  (email \email{yl55@hw.ac.uk}, orcid \orcid{0000-0003-0545-0732})}

\date{November 2018}


\maketitle

\begin{abstract}
  We establish proof-theoretic, constructive and coalgebraic foundations for
  proof search in coinductive Horn clause theories.
  Operational semantics of coinductive Horn clause resolution is cast in
  terms of \emph{coinductive uniform proofs}; its constructive content is
  exposed via soundness relative to an intuitionistic first-order logic with
  recursion controlled by the later modality;
  and soundness of both proof systems is proven relative to a novel
  coalgebraic description of complete Herbrand models.
\end{abstract}

\section{Introduction}
\label{sec:intro}


\emph{Horn clause logic} is a Turing complete and constructive fragment of
first-order logic, that plays a central role in
verification~\cite{BjornerGMR15},
automated theorem proving~\cite{Llo87,NadathurM99,MN12} and type inference.
Examples of the latter can be traced from the Hindley-Milner type inference
algorithm~\cite{Milner78,SulzmannS08}, to more recent uses of Horn clauses in
Haskell type classes~\cite{Lammel:2005,BottuKSOW17} and in refinement
types~\cite{HashimotoU15,BurnOR17}.
Its popularity can be attributed to well-understood fixed point semantics and
an efficient semi-decidable resolution procedure for automated proof search.

According to the standard fixed point semantics~\cite{vEmdK76,Llo87}, given a set
$P$ of Horn clauses, the
\emph{least Herbrand model} for $P$ is the  set of
all (finite) ground atomic formulae \emph{inductively entailed} by $P$.
For example,
the two clauses below define the set of natural numbers in the least Herbrand
model.
\begin{align*}
  \clNatZ &:   \nat \, 0  \\
  \clNatS &: \all{x}  \nat \, x  \impl  \nat \, (s\, x)
\end{align*}
\noindent
Formally, the least Herbrand model  for the above two clauses is the set of ground atomic
formulae obtained by taking a (forward) closure of the above two clauses.
The model for $\nat$ is given by
$\natModel = \set{
  \nat \, 0, \, \nat\, (s \, 0),\, \nat \, (s\, (s \, 0)), \dotsc
}$.



We can also view Horn clauses coinductively.
The \emph{greatest complete Herbrand model} for a
set $P$ of Horn clauses is the largest set of finite and infinite
ground atomic formulae \emph{coinductively entailed} by $P$. For example, the
greatest complete Herbrand model for the above two clauses is the set
\begin{equation*}
  \natModel^{\infty} =
  \natModel \cup \set{\nat \, (s\, (s \, (\dotsb)))},
\end{equation*}
obtained by taking a backward closure of the
above two inference rules on the set of all finite and infinite ground atomic formulae.
The \emph{greatest Herbrand model}  is the largest set of \emph{finite}
ground atomic formulae \emph{coinductively entailed} by $P$.
In our example, it would be given by $\natModel$ already.
Finally, one can also consider the \emph{least complete Hebrand model}, which
interprets entailment inductively but over potentially infinite terms.
In the case of $\nat$, this interpretation does not differ from $\natModel$.
However, finite paths in coinductive structures like transition systems,
for example, require such semantics.

The need for coinductive semantics of Horn clauses arises in several scenarios:
the Horn clause theory may explicitely define a coinductive data structure or
a coinductive relation.
However, it may also happen that a Horn clause theory, which is not explicitly
intended as coinductive, nevertheless gives rise to infinite inference by resolution and has
an interesting coinductive  model.
This commonly happens in type inference.
We will illustrate all these cases by means of examples.

\paragraph{Horn clause theories as coinductive data type
  declarations}
The following clause defines, together with $\clNatZ$ and $\clNatS$, the
type of streams over natural numbers.
\begin{equation*}
  \kappa_{\stream} :
  \all{x y} \nat \, x \, \land \, \stream \, y \,  \impl \stream \, (\scons \, x \, y)
\end{equation*}
This Horn clause does not have a meaningful inductive, i.e. least fixed point, model.
 The greatest Herbrand model of the clauses is given by
 \begin{equation*}
   \streamModel =  \natModel^{\infty} \cup
   \setDef{
     \stream (\scons \, x_0 \, (\scons \, x_1 \, \dotsb))
   }{\nat \, x_0, \nat \, x_1, \dotsc \in \natModel^{\infty}}
\end{equation*}

In trying to prove, for example, the goal $(\stream \, x)$, a goal-directed
proof search may try to find a substitution for $x$ that
will make  $(\stream \, x)$ valid relative to the coinductive model of this set
of clauses.
This search by resolution may proceed by means of an infinite reduction
$\underline{\stream \, x}
\stackrel{\kappa_{\stream}:[\sconsFunc{y}{x'}/x]}{\sldRed} \nat\, y \land \stream \, x'
\stackrel{\clNatZ:[0/y]}{\sldRed}  \underline{\stream \, x'}
\stackrel{\kappa_{\stream}:[\sconsFunc{y'}{x''}/x']}{ \sldRed} \cdots$,
thereby generating a stream $Z$ of zeros  via composition of the computed
substitutions: $Z = (\scons \, 0 \, x')[\scons \, 0 \, x''/x'] \dotsm$.
Above, we annotated each resolution step with the label of the clause it resolves
against and the computed substitution.
A method to compute an answer for this infinite sequence of reductions was given
by Gupta~et~al.~\cite{GuptaBMSM07} and Simon~et~al.~\cite{SimonBMG07}:
the underlined loop gives rise to the circular unifier $x = \scons\, 0\, x$ that
corresponds to the infinite term $Z$. It is proven that, if a loop and a corresponding
circular unifier are detected, they provide an answer that is sound relative to the greatest complete Herbrand model of the clauses.
This approach is known under the name of CoLP.





\paragraph{Horn Clause Theories in Type Inference}
Below clauses give the typing rules of the simply typed $\lambda$-calculus, and may be used for  type inference or type checking:
\begin{align*}
    \kappa_{t1} &:
    \all{x \, \Gamma  \, a} \varP \, x \land \findP \, \Gamma \, x \, a\
    \impl \typedP \, \Gamma \, x  \, a \\
    \kappa_{t2}&:
    \all{x \, \Gamma \, a \, m \, b} \typedP \, [x : a | \Gamma] \,  m \, b
    \impl \typedP \, \Gamma\, (\lambda \, x \,  m)\, (a \rightarrow b) \\
    \kappa_{t3}&:
    \all{\Gamma \, a \, m \, n \, b}  \typedP \, \Gamma \, m\, (a \rightarrow b) \land
    \typedP\, \Gamma\,  n\, a
    \impl \typedP \, \Gamma \, (\appP\, m\,  n)  \, b
  \end{align*}

It is well known that the $Y$-combinator is not typable in the simply-typed
$\lambda$-calculus and, in particular, self-application $\lam{x} x \, x$ is not
typable either.
However, by switching off the occurs-check in Prolog or by allowing circular
unifiers in CoLP~\cite{GuptaBMSM07,SimonBMG07}, we can
resolve the goal ``$\typedP \, [] \,  (\lambda \, x \, (\appP\, x\,  x)) \, a$''
and would compute the circular substitution:
$a = b \rightarrow c, b = b \rightarrow c$ suggesting that an infinite, or
circular, type may be able to type this $\lambda$-term.
A similar trick would provide a typing for the $Y$-combinator.
Thus, a coinductive interpretation of the above Horn clauses yields a theory
of infinite types, while an inductive interpretation corresponds to the standard
type system of the simply typed $\lambda$-calculus.

\paragraph{Horn Clause Theories in Type Class Inference}
 Haskell type class inference does not require circular unifiers but
may require a cyclic
resolution inference~\cite{Lammel:2005,FKS15}.
Consider, for example, the following mutually defined data structures in
Haskell.
\lstset{language=Haskell}
\begin{lstlisting}
data OddList  a  =   OCons a (EvenList  a)
data EvenList a  =   Nil | ECons a (OddList a)
\end{lstlisting}
This type declaration gives rise to the following equality class instance
declarations, where we leave the, here irrelevant, body out.
\begin{lstlisting}
instance(Eq a, Eq (EvenList a)) => Eq (OddList a) where
instance(Eq a, Eq (OddList  a)) => Eq (EvenList a) where
\end{lstlisting}
The above two type class instance declarations have the shape of Horn clauses.
Since the two declarations mutually refer to each other, an instance inference
for, e.g., \lstinline{Eq (OddList Int)} will give rise to
an infinite resolution that alternates between the subgoals
\lstinline{Eq (OddList Int)} and \lstinline{Eq (EvenList Int)}.
The solution is to terminate the computation as soon as the cycle is detected~\cite{Lammel:2005}, and this method has been
shown sound relative to the greatest Herbrand models in~\cite{FKH16}. 
We will demonstrate this later in the proof systems proposed in this paper.

The diversity of these coinductive examples in the existing literature shows
that there is a practical demand for coinductive methods in Horn clause logic, but it also shows that
no unifying proof-theoretic approach exists to allow for a generic use of these methods. This causes several problems.

\textbf{Problem 1. The existing proof-theoretic coinductive interpretations of
  cycle and loop detection are unclear, incomplete and not uniform.}
\begin{table}[bt]
  {\small{
      \begin{tabular}{|p{2.65cm} ||  c|c|c|}
        \hline
        Horn clauses
        & $\gamma_1: \all{x} p \, x \impl p \, x$
        & $\gamma_2: \all{x} p (f\, x) \impl p \, x$
        & $\gamma_3: \all{x} p \, x \impl p(f\, x)$
        \\ \hline
        Greatest  Herbrand model:
        & $ \{p \, a\}$
        & \parbox[t]{4.5cm}{
          $\{p(a), p(f \, a), p(f(f \, a)), \dotsc \}$}
        & $\emptyset$
        \\ \hline
        Greatest complete  Herbrand model:
        & $ \{ p \, a\}$
        & \parbox[t]{4cm}{
          $ \{p(a), p(f \, a), p(f(f \, a), \dotsc,$
          \hspace*{1ex}$p(f(f\ldots)\}$
        }
        & $\{p(f(f\ldots)\}$
        \\  \hline
        CoLP substitution for query $p \, a$
        & $\textit{id}$
        & fails
        & fails
        \\ \hline
        CoLP substitution for query $p \, x$
        & $\textit{id}$
        & $x = f \, x $
        & $x = f \, x $
        \\ \hline
      \end{tabular}
      \caption{\textbf{Examples of greatest (complete) Herbrand models for Horn
          clauses $\gamma_1$, $\gamma_2$, $\gamma_3$.}
        The signatures are $\set{a}$ for the clause $\gamma_1$ and $\set{a, f}$
        for the others.
      }
      \label{tab:herbrand-overview}
    }
  }
\end{table}

To see this, consider \tabRef*{herbrand-overview}, which exemplifies three kinds of
circular phenomena in Horn clauses:
The clause $\gamma_1$ is the easiest case.
Its coinductive models are given by the finite set $\{ p \, a\}$.
On the other extreme is the clause $\gamma_3$ that, just like $\kappa_{\stream}$, admits only an infinite formula in its coinductive model.
The intermediate case is $\gamma_2$, which could be interpreted by an infinite set of finite formulae in its greatest Herbrand model, or may admit an infinite formula in its greatest complete Herbrand model. Examples like $\gamma_1$ appear in Haskell type class resolution~\cite{Lammel:2005}, and examples like $\gamma_2$ in its experimental extensions~\cite{FKS15}.
Cycle detection would only cover computations for $\gamma_1$, whereas $\gamma_2$, $\gamma_3$ require some form of loop detection%
\footnote{We follow the standard terminology of~\cite{Terese}
  and say that two formulae $F$ and $G$ form a cycle if $F = G$, and
  a loop if $F[\theta] = G[\theta]$ for some (possibly circular)
  unifier $\theta$.}.
However, CoLP's loop detection gives confusing results here. It correctly fails to infer $p\, a$ from $\gamma_3$ (no unifier for subgoals $p \, a$ and $p \, (f \, a)$ exists), but incorrectly fails to infer $p \, a$ from $\gamma_2$ (also failing to unify $p \, a$ and $p \, (f \, a)$). The latter failure is misleading bearing in mind that  $p\,  a$ is in fact in the coinductive model of $\gamma_2$.
Vice versa, if we interpret the CoLP answer $x = f \, x$ as a declaration
of an infinite term $(f \, f \, \ldots)$ in the model, then CoLP's answer
for $\gamma_3$ and $p \, x$ is exactly correct, however the same answer is badly incomplete for
the query involving  $p \, x$ and $\gamma_2$, because $\gamma_2$  in fact admits
other, finite, formulae in its models.
And in some applications, e.g. in Haskell type class inference, a finite formula would be the only acceptable answer for any query to $\gamma_2$.

This set of examples shows that loop detection is too coarse a tool to give an operational semantics to a diversity of coinductive models.

\textbf{Problem 2. Constructive interpretation of coinductive proofs in Horn clause logic is unclear.}
Horn clause logic is known to be a constructive fragment of FOL.
Some applications of Horn clauses rely on this property in a crucial way.
For example, inference in Haskell type class resolution is constructive: when a certain formula $F$ is inferred,
the Haskell compiler in fact constructs a proof term that inhabits $F$ seen as type.
In our earlier example \lstinline{Eq (OddList Int)} of the Haskell type classes,
Haskell in fact  captures the cycle by a fixpoint term $t$ and
proves that $t$ inhabits the type  \lstinline{Eq (OddList Int)}.
Although we know from~\cite{FKH16} that these computations
are sound relative to greatest Herbrand models of Horn clauses, the results of~\cite{FKH16} do not extend to Horn clauses like $\gamma_3$ or  $\kappa_{\stream}$, or generally to Horn clauses modelled by the greatest \emph{complete} Herbrand models.
This shows that there is not just a need for coinductive proofs in Horn clause
logic, but \emph{constructive} coinductive proofs.

\textbf{Problem 3. Incompleteness of circular unification for irregular coinductive data structures.}
\tabRef{herbrand-overview} already showed some issues with incompleteness of circular unification.
A more famous consequence of it is the failure of circular unification to capture irregular terms.
This is illustrated by the following Horn clause, which defines the
infinite stream of successive natural numbers.
\begin{equation*}
  \kappa_{\fromP}: \all{x \, y}  \fromP \, (s\, x) \, y
  \impl \fromP \, x \, (\scons \, x \, y)
\end{equation*}
The reductions for $\fromP \, 0 \, y$ consist only of irregular
(non-unifiable) formulae:
\begin{equation*}
\fromPred{0}{y}
\stackrel{ \kappa_{\fromP}: [\sconsFunc{0}{y'}/y]}{\sldRed} \fromPred{(s\, 0)}{y'}
\stackrel{ \kappa_{\fromP}:[\sconsFunc{(s\, 0)}{y''}/y']}{\sldRed} \cdots
\end{equation*}
The composition of the computed substitutions would suggest as answer an
infinite term that is given by
$\fromP \, 0 \, (\sconsFunc{0}{(\sconsFunc{(s \, 0)}{\ldots})})$.
However, circular unification no longer helps to compute this answer, and CoLP fails.
Thus, there is a need for
 more general operational semantics that allows irregular coinductive structures.

\subsection*{A New Theory of Coinductive Proof Search in Horn Clause Logic}

In this paper, we aim to give a principled and \emph{general} theory that resolves the three problems above.
This theory establishes a \emph{constructive} foundation for coinductive resolution and allows us to
give proof-theoretic characterisations of the approaches that have been proposed
throughout the literature.


To solve Problem 1, we follow the footsteps of the \emph{uniform proofs} by
Miller et al.~\cite{MN12,Miller91:UniformProofs},
who gave a general proof-theoretic account of resolution in first-order
Horn clause logic  (\emph{fohc}) and three extensions:
first-order hereditary Harrop clauses (\emph{fohh}),
higher-order Horn clauses (\emph{hohc}),
and higher-order hereditary Harrop clauses (\emph{hohh}).
  In \secRef*{uniform-proofs}, we extend uniform proofs with a general coinduction
proof principle. 
The resulting framework is called \emph{coinductive uniform proofs (CUP)}.
We show how the coinductive extensions of the four logics of
Miller et al., which we name  \cofohc{}, \cofohh{}, \cohohc{}
and \cohohh{}, give a precise proof-theoretic characterisation to the different kinds of coinduction described in the literature.
For example, coinductive proofs involving the clauses $\gamma_1$ and $\gamma_2$ belong to  \cofohc{} and \cofohh{}, respectively.
However, proofs involving clauses like $\gamma_3$ or $\kappa_{\stream}$ require
in addition fixed point terms to express infinite data.
These extentions are denoted by
$\cofohc_{\fix}$, $\cofohh_{\fix}$, $\cohohc_{\fix}$ and $\cohohh_{\fix}$.


\secRef*{uniform-proofs} shows that this yields the cube in \figRef*{cup-cube},
where the arrows show the increase in logical strength.
\begin{figure}[bt]
\begin{equation*}
  \begin{tikzcd}[row sep=small, column sep=small]
    & \cohohc_{\fix}   \arrow{rr}{}
    & & \cohohh_{\fix}
    \\
    \cohohc           \arrow{ur}{}
    & & \cohohh       \arrow{ur}{}
    \\
    & \cofohc_{\fix}   \arrow{rr}{} \arrow{uu}{}
    & & \cofohh_{\fix}              \arrow{uu}{}
                                   \arrow[from=lllu, to=lu, crossing over]
    \\
    \cofohc           \arrow{rr}{} \arrow{uu}{}   \arrow{ur}{}
    & & \cofohh       \arrow[crossing over]{uu}{} \arrow{ur}{}
  \end{tikzcd}
\end{equation*}
\vspace*{-1em}
\caption{Cube of logics covered by CUP}
\label{fig:cup-cube}
\end{figure}
The invariant search for regular infinite objects done in CoLP is
fully described by the logic $\cofohc_{\fix}$, including proofs for clauses
like $\gamma_3$ and $\kappa_{\stream}$.
An important consequence is that CUP is complete for
$\gamma_1$, $\gamma_2$, and $\gamma_3$,
e.g.  $p \, a$ is provable from $\gamma_2$ in CUP, but not in CoLP.

In tackling Problem~3, we will find that the irregular proofs, such as those for
$\kappa_{\fromP}$,  
can be given in
$\cohohh_{\fix}$.
The stream of successive numbers can be defined as a higher-order fixed point term
$\fromFun = \fix[f] \lam{x} \scons \, x \, (f \, (s \, x))$, and the proposition
$\all{x} \fromP \, x \, (\fromFun \, x)$ is provable in $\cohohh_{\fix}$.
This requires the use of higher-order syntax,
fixed point terms and the goals of universal shape, which become available in the syntax of Hereditary Harrop logic.


In order to solve Problem 2 and to expose the constructive nature of the
resulting proof systems, we present
in \secRef*{loeb-translation} a coinductive extension of first-order
intuitionistic logic and its sequent calculus.
This extension  ($\iFOLm$) is based on the so-called later modality (or Löb modality)
known from
provability logic~\cite{Beklemishev1999:ParameterFreeInduction,%
  Smorynski1985:SelfReference},
type theory~\cite{Nakano00:ModalityRec,Appel07:ModalTypeSystem}
and domain theory~\cite{Birkedal11:GuardedDomainTheory-LICS}.
However, our way of using the later modality to control recursion in
first-order proofs is new and builds on~\cite{B18,Basold18:BreakingTheLoop}.
In the same section we also show that
CUP is
sound
relative to  $\iFOLm$, which gives us a handle on the constructive
content of CUP.
This yields, among other consequences, a constructive interpretation of CoLP
proofs.

\secRef{soundness} is dedicated to showing soundness of both coinductive proof
systems relative to 
\emph{complete Herbrand models}~\cite{Llo87}.
The construction of these models is carried out by using coalgebras and
category theory.
This frees us from having to use topological methods and will simplify future
extensions of the theory to, e.g., encompass typed logic programming.
It also makes it possible to give original and constructive proofs of soundness
for both CUP and $\iFOLm$ in \secRef{soundness}.
We finish the paper with discussion of related and future work.

\subsection*{Originality of the contribution}
The results of this paper give a comprehensive characterisation of
coinductive Horn clause theories from the point of view of proof search
(by expressing coinductive proof search and resolution as coinductive
uniform proofs),
constructive proof theory (via a translation into an intuitionistic sequent
calculus), and
coalgebraic semantics (via coinductive Herbrand models
and constructive soundness results).
Several of the presented results have never appeared before:
the coinductive extension of uniform proofs;
characterisation of coinductive properties of Horn clause theories in
higher-order logic with and without fixed point operators;
coalgebraic and fibrational view on complete Herbrand models; and
soundness of an intuitionistic logic with later modality relative to complete
Herbrand models.



\section{Preliminaries: Terms and Formulae}
\label{sec:terms-formulae}

In this section, we set up notation and terminology for the rest of the paper.
Most of it is standard, and blends together the notation used
in~\cite{MN12} and~\cite{Barendregt:LambdaCalcTypes}.


\begin{definition}
  \label{def:types}
  We define the sets $\Types$ of \emph{types} and
  $\PropT$ of \emph{proposition types}
  by the following grammars, where $\baseT$ and $\propT$
  are the \emph{base type} and \emph{base proposition type}.
  \begin{equation*}
    \Types \ni \sigma, \tau \coloncolonequals
    \baseT
    \mid \sigma \to \tau
    \qquad \qquad
    \PropT \ni \rho \coloncolonequals
    \propT
    \mid \sigma \to \rho, \quad \sigma \in \Types
  \end{equation*}
  We adapt the usual convention that $\to$ binds to the right.
\end{definition}

\begin{definition}
  \label{def:signature-terms}
  A \emph{term signature} $\TSig$ is a set of pairs $c : \tau$, where
  $\tau \in \Types$, and a \emph{predicate signature} is a set
  $\PSig$ of pairs $p : \rho$ with $\rho \in \PropT$.
  The elements in $\TSig$ and $\PSig$ are called \emph{term symbols}
  and \emph{predicate symbols}, respectively.
  Given term and predicate signatures $\TSig$ and $\PSig$, we refer to
  the pair $(\TSig, \PSig)$ as \emph{signature}.
  Let $\Vars$ be a countable set of variables, the elements of which we denote
  by $x, y, \dotsc$
  We call a finite list $\Gamma$ of pairs $x : \tau$ of variables and types a
  \emph{context}.
  The set  $\FixTerms{\TSig}$ of \emph{(well-typed) terms} over $\TSig$
  is the collection of all $M$ with $\typed{M}{\tau}$ for some context
  $\Gamma$ and type $\tau \in \Types$, where $\typed{M}{\tau}$ is defined
  inductively in \figRef*{terms}.
  A term is called \emph{closed} if $\typed[]{M}{\tau}$, otherwise it is called
  \emph{open}.
  Finally, we let $\Terms{\TSig}$ denote the set of all terms $M$ that
  do not involve $\fix$.
  \begin{figure}[bt]
    \begin{spreadlines}{7pt}
      \begin{empheq}[box=\fbox]{gather*}
        \AxiomC{$c : \tau \in \TSig$}
        \UnaryInfC{$\typed{c}{\tau}$}
        \DisplayProof
        \quad
        \AxiomC{$x : \tau \in \Gamma$}
        \UnaryInfC{$\typed{x}{\tau}$}
        \DisplayProof
        \quad
        \AxiomC{$\typed{M}{\sigma \to \tau}$}
        \AxiomC{$\typed{N}{\sigma}$}
        \BinaryInfC{$\typed{M \> N}{\tau}$}
        \DisplayProof
        \\
        \AxiomC{$\typed[\Gamma, x : \sigma]{M}{\tau}$}
        \UnaryInfC{$\typed{\lam{x} M}{\sigma \to \tau}$}
        \DisplayProof
        \quad
        \AxiomC{$\typed[\Gamma, x : \tau]{M}{\tau}$}
        \UnaryInfC{$\typed{\fix[x] M}{\tau}$}
        \DisplayProof
      \end{empheq}
    \end{spreadlines}
    \vspace*{-1em}
    \caption{Well-Formed Terms}
    \label{fig:terms}
    \vspace*{-1em}
  \end{figure}
\end{definition}

\begin{definition}
  \label{def:formulae}
  Let $(\TSig, \PSig)$ be a signature.
  We say that $\varphi$ is a \emph{(first-order) formula} in context $\Gamma$,
  if $\validForm{\varphi}$ is inductively derivable from the rules
  in \figRef*{formulae}.
  \begin{figure}[bt]
    \begin{spreadlines}{7pt}
      \begin{empheq}[box=\fbox]{gather*}
        \def\defaultHypSeparation{\hskip .05in}
        \AxiomC{$(p : \tau_1 \to \dotsm \to \tau_n \to \propT) \in \PSig$}
        \AxiomC{$\typed{M_1}{\tau_1}$}
        \AxiomC{$\dotsm$}
        \AxiomC{$\typed{M_n}{\tau_n}$}
        \QuaternaryInfC{$\validForm{p \> M_1 \dotsm \> M_n}$}
        \bottomAlignProof
        \DisplayProof
        \\[7pt]
        \AxiomC{}
        \UnaryInfC{$\validForm{\top}$}
        \bottomAlignProof
        \DisplayProof
        \quad
        \AxiomC{$\validForm{\varphi}$}
        \AxiomC{$\validForm{\psi}$}
        \AxiomC{$\Box \in \set{\conj, \disj, \impl}$}
        \TrinaryInfC{$\validForm{\varphi \mathbin{\Box} \psi}$}
        \bottomAlignProof
        \DisplayProof
        \quad
        \AxiomC{$\validForm[\Gamma, x:\tau]{\varphi}$}
        \UnaryInfC{$\validForm{\all{x:\tau} \varphi}$}
        \bottomAlignProof
        \DisplayProof
        \quad
        \AxiomC{$\validForm[\Gamma, x : \tau]{\varphi}$}
        \UnaryInfC{$\validForm{\exist{x : \tau} \varphi}$}
        \bottomAlignProof
        \DisplayProof
      \end{empheq}
    \end{spreadlines}
    \vspace*{-1em}
    \caption{Well-formed Formulae}
    \label{fig:formulae}
  \end{figure}
\end{definition}


\begin{definition}
  \label{def:reductions}
  The \emph{reduction relation} $\reduce$ on terms in $\FixTerms{\TSig}$ is
  given as the compatible closure (reduction under applications and binders)
  of $\beta$- and $\fix$-reduction:
  \begin{equation*}
    (\lam{x} M) N \reduce M \subst{N/x}
    \qquad
    \fix[x] M \reduce M \subst{\fix[x] M/x}
  \end{equation*}
  We denote the reflexive, transitive closure of $\reduce$ by $\reduceIter$.
  Two terms $M$ and $N$ are called \emph{convertible}, if $M \conv N$, where
  $\conv$ is the equivalence closure of $\reduce$.
  Conversion of terms extends to formulae in the obvious way:
  if $M_k \conv M'_k$ for $k = 1, \dotsc, n$, then
  $p \> M_1 \dotsm M_n \conv p \> M'_1 \dotsm \> M'_n$.
\end{definition}

We will use in the following that the above calculus features
subject reduction and confluence, cf.~\cite{Plotkin77:PCF}:
if $\typed{M}{\tau}$ and $M \conv N$, then $\typed{N}{\tau}$;
and $M \conv N$ iff there is a term $P$, such that $M \reduceIter P$ and
$N \reduceIter P$.


The \emph{order} of a type $\tau \in \Types$ is given as usual
by $\ord(\iota) = 0$ and
$\ord(\sigma \to \tau) = \max\set{\ord(\sigma) + 1, \ord(\tau)}$.
If $\ord(\tau) \leq 1$, then the arity of $\tau$ is given by
$\ar(\iota) = 0$ and $\ar(\iota \to \tau) = \ar(\tau) + 1$.
A signature $\TSig$ is called \emph{first-order}, if for all $f : \tau \in \TSig$
we have $\ord(\tau) \leq 1$.
We let the arity of $f$ then be $\ar(\tau)$ and denote it by $\ar(f)$.

\begin{definition}
  \label{def:guarded-terms}
  The set of \emph{guarded base terms} over a first-order signature
  $\TSig$ is given by the following type-driven rules.
  \begin{gather*}
    \AxiomC{$x : \tau \in \Gamma$}
    \AxiomC{$\ord(\tau) \leq 1$}
    \BinaryInfC{$\guarded{x}{\tau}$}
    \DisplayProof
    \quad
    \AxiomC{$f : \tau \in \TSig$}
    \UnaryInfC{$\guarded{f}{\tau}$}
    \DisplayProof
    \quad
    \AxiomC{$\guarded{M}{\sigma \to \tau}$}
    \AxiomC{$\guarded{N}{\sigma}$}
    \BinaryInfC{$\guarded{M \> N}{\tau}$}
    \DisplayProof
    \\
    \def\defaultHypSeparation{\hskip .1in}
    \AxiomC{$f : \sigma \in \TSig$}
    \AxiomC{$\ord(\tau) \leq 1$}
    \AxiomC{$
      \guarded[\Gamma, x : \tau, y_1 : \iota, \dotsc, y_{\ar(\tau)} : \iota]{
        M_i}{\iota}$}
    \AxiomC{$1 \leq i \leq \ar(f)$}
    \QuaternaryInfC{$\guarded{\fix[x] \lam{\vv{y}} f \> \vv{M}}{\tau}$}
    \DisplayProof
  \end{gather*}
  General \emph{guarded terms} are terms $M$, such that all
  $\fix$-subterms are guarded base terms, which means that they are generated
  by the following grammar.%
  \begin{equation*}
    G \coloncolonequals
    M \; (\text{with } \guarded[]{M}{\tau} \text{ for some type } \tau)
    \mid c \in \TSig \mid x \in \Vars \mid G \; G \mid \lam{x} G
  \end{equation*}
  Finally, $M$ is a \emph{first-order} term over
  $\TSig$ with $\typed{M}{\tau}$ if $\ord(\tau) \leq 1$ and the types of all
  variables occurring in $\Gamma$ are of order $0$.
  We denote the set of guarded first-order terms $M$ with $\typed{M}{\iota}$ by
  $\GuardedFOTerms{\TSig}(\Gamma)$ and the set of guarded terms in $\Gamma$ by
  $\GuardedTerms{\TSig}(\Gamma)$.
  If $\Gamma$ is empty, we just write $\GuardedFOTerms{\TSig}$ and
  $\GuardedTerms{\TSig}$, respectively.
\end{definition}

Note that an important aspect of guarded terms is that no free variable occurs
under a $\fix$-operator.
\emph{Guarded base terms} should be seen as specific fixed point terms
that we will be able to unfold into potentially infinite trees.
\emph{Guarded terms} close guarded base terms under operations of the
simply typed $\lambda$-calculus.

\hnote[disable]{
  Would we need to allow open terms for guarded terms, that is, generalise
  $\guarded{M}{\tau}$ instead of $\guarded[]{M}{\tau}$ in the grammar for $G$
  above?
  Note that the problem this generality would be that guarded terms are not
  stable under substitution: if $\guarded[x : \baseT]{M}{\tau}$
  and $N$ is guarded with $\typed[]{N}{\baseT}$, then
  $M \ssubst{N/x}$ is \emph{not} a guarded base term!
  Hence, we would not be able to show that if $M$ and $N$ are guarded that
  $M \ssubst{N/x}$ is guarded.
}

\begin{example}
  \label{ex:guarded-terms}
  Let us provide a few examples that illustrate (first-order) guarded terms.
  We use the first-order signature
  $\TSig = \set{\scons \from \baseT \to \baseT \to \baseT,
    s \from \baseT \to \baseT, 0 : \baseT}$.
  \begin{enumerate}
  \item Let
    $\fromFun = \fix[f] \lam{x} \scons \; x \; (f \; (s \; x))$
    be the function that computes the streams of numerals starting
    at the given argument.
    It is easy to show that
    $\guarded[]{\fromFun}{\baseT \to \baseT}$ and so
    $\fromFun \; 0 \in \GuardedFOTerms{\TSig}$.
  \item For the same signature $\TSig$ we also have
    $\guarded[x : \baseT]{x}{\baseT}$.
    Thus $x \in \GuardedFOTerms{\TSig}(x : \baseT)$ and
    $s \; x \in \GuardedFOTerms{\TSig}(x : \baseT)$.
  \item We have $\guarded[x : \baseT \to \baseT]{x \; 0}{\baseT}$, but
    $(x \; 0) \not\in \GuardedFOTerms{\TSig}(x : \baseT \to \baseT)$.
  \end{enumerate}
\end{example}

The purpose of guarded terms is that these are productive, that is, we can
reduce them to a term that either has a function symbol at the root or is
just a variable.
In other words, guarded terms have head normal forms:
We say that a term $M$ is in \emph{head normal form}, if $M = f \; \vv{N}$ for
some $f \in \TSig$ or if $M = x$ for some variable $x$.
The following lemma is a technical result that is needed to show in
\lemRef*{guarded-computation} that all guarded terms have a head normal form.
\begin{lemmarep}
  \label{lem:guarded-subst}
  Let $M$ and $N$ be guarded base terms with
  $\guarded[\Gamma, x : \sigma]{M}{\tau}$
  and $\guarded[\Gamma]{N}{\sigma}$.
  Then $M \subst{N/x}$ is a guarded base term with
  $\guarded{M \subst{N/x}}{\tau}$.
\end{lemmarep}
\begin{appendixproof}
  Let $M$ and $N$ be as above.
  We proceed by induction on the derivation that $M$ is guarded to show that
  $M \subst{N/x}$ is guarded as well.
  \begin{itemize}
  \item Suppose $M = y$ for some variable.
    If $y = x$, then $M \subst{N/x} = N$ and $\tau = \sigma$.
    Thus, we have $\guarded{M \subst{N/x}}{\tau}$.
  \item The case for signature symbols is immediate, as for $f \in \TSig$ we
    have $f \subst{N/x} = f$.
  \item Suppose $\guarded[\Gamma, x : \sigma]{M \> P}{\tau}$.
    By the IH, we have
    $\guarded[\Gamma]{M \subst{N/x}}{\gamma \to \tau}$ and
    $\guarded[\Gamma]{P \subst{N/x}}{\gamma}$.
    Thus, we obain
    $\guarded[\Gamma]{(M \> P) \subst{N/x}}{\tau}$.
  \item Finally, assume that
    $\guarded[\Gamma, x : \sigma]{\fix[z] \lam{\vv{y}} f \> \vv{M}}{\tau}$.
    Then by IH, we have
    \begin{equation*}
      \guarded[\Gamma, z : \tau, y_1 : \iota, \dotsc, y_{\ar(\tau)} : \iota]{
        M_i \subst{N/x}}{\iota}
    \end{equation*}
    and so
    $\guarded[\Gamma]{
      \parens[\big]{\fix[z] \lam{\vv{y}} f \> \vv{M}} \subst{N/x}}{\tau}$.
    \qedhere
  \end{itemize}
\end{appendixproof}

\begin{lemmarep}
  \label{lem:guarded-computation}
  If $M$ is a first-order guarded term with $M \in \GuardedFOTerms{\TSig}(\Gamma)$,
  then $M$ reduces to a unique head normal form.
  This means that either
  \begin{enumerate*}[label=(\roman*)]
  \item there is a unique $f \in \TSig$ and terms
    $N_1, \dotsc, N_{\ar(f)}$ with $\guarded{N_k}{\baseT}$ and
    $M \reduceIter f \> \vv{N}$,
    and for all $L$ if $M \reduceIter f \> \vv{L}$,
    then $\vv{N} \conv \vv{L}$; or
  \item $M \reduceIter x$ for some $x : \baseT \in \Gamma$.
  \end{enumerate*}
\end{lemmarep}
\begin{appendixproof}
  The term $M$ with $\guarded{M}{\baseT}$ can have either of the following
  three shapes:
  \begin{enumerate}
  \item $x$, where $x : \baseT \in \Gamma$
  \item $f \> \vv{N}$ with $\guarded{N_k}{\baseT}$, or
  \item $(\fix[x] \lam{\vv{y}} f \> \vv{M}) \> \vv{N}$ with
    $\guarded[\Gamma,x : \tau, y_1 : \iota, \dotsc, y_{\ar(\tau)} : \iota]{
      M_k}{\iota}$ for $k = 1, \dotsc, \ar(f)$ and
    $\guarded{N_i}{\iota}$ for $i = 1, \dotsc, \ar(\tau)$,
  \end{enumerate}
  because variables can only occur in argument position due to the order
  restriction of the types in $\Gamma$.
  In the first two cases we are done immediately.
  For the third case, we let $P = \fix[x] \lam{\vv{y}} f \> \vv{M}$
  and then find that
  \begin{equation*}
    P \> \vv{N}
    \reduceIter f \> \parens*{\vv{M} \ssubst*{P/x, \vv{N}/\vv{y}}}.
  \end{equation*}
  \lemRef{guarded-subst} gives us now that each
  $M_i \ssubst*{P/x, \vv{N}/\vv{y}}$ is guarded.
  Finally, if $M \reduceIter f \> \vv{L}$, then $\vv{N} \conv \vv{L}$ by
  confluence of the reduction relation.
  \qedhere
\end{appendixproof}

\begin{toappendix}
\knote[disable]{please check Lemma and Lem. look odd next to each other }
  In \lemRef*{guarded-subst} we have shown that guarded base terms are stable
under substitution, that is, substituting a guarded base term into another
results into a guarded base term.
The following lemma shows that the same is true for guarded terms.
This result is necessary to define substitution for formulae over guarded terms,
see \defRef*{atoms}.
\begin{lemmarep}
  \label{lem:full-guarded-subst}
  Let $M \in \GuardedTerms{\TSig}(\Gamma, x)$
  and $N \in \GuardedTerms{\TSig}(\Gamma)$
  be guarded terms
  with $\typed[\Gamma, x : \tau]{M}{\sigma}$ and $\typed{N}{\tau}$.
  Then $M \subst{N/x} \in \GuardedTerms{\TSig}(\Gamma)$
  and $\typed{M \subst{N/x}}{\sigma}$.
\end{lemmarep}
\begin{proof}
  By an easy induction on $M$.
\end{proof}
\end{toappendix}

We end this section by introducing the notion of an atom and refinements thereof.
This will enable us to define the different logics and thereby to analyse
the strength of coinduction hypotheses, which we promised in the introduction.
\begin{definition}
  \label{def:atoms}
  A formula $\varphi$ of the shape
  $\top$ or $p \; M_1 \dotsm \; M_n$ is an \emph{atom} and a
  \begin{itemize}
  \item \emph{first-order atom}, if $p$ and all the terms $M_i$ are first-order;
  \item \emph{guarded atom}, if all terms $M_i$ are guarded; and
  \item \emph{simple atom}, if all terms $M_i$ are non-recursive, that is,
    are in $\Terms{\TSig}$.
  \end{itemize}
  First-order, guarded and simple atoms are denoted by
  $\foAt$, $\guardedAt$ and $\simpleAt$.
  We denote conjunctions of these predicates by
  $\foGuardedAt = \foAt \cap \guardedAt$ and
  $\foSimpleAt = \foAt \cap \simpleAt$.
\end{definition}

Note that the restriction for $\guardedAt$ only applies to fixed point terms.
Hence, any formula that contains terms without $\fix$ is already in
$\guardedAt$ and $\guardedAt \cap \simpleAt = \simpleAt$.
Since these notions are rather subtle, we give a few examples
\begin{example}
  We list three examples of first-order atoms.
  \begin{enumerate}
  \item For $x : \baseT$ we have $\stream \; x \in \foAt$,
    but there are also ``garbage'' formulae like
    ``$\stream \; (\fix[x] x)$'' in $\foAt$.
    Examples of atoms that are not first-order are $p \; M$, where
    $p : (\baseT \to \baseT) \to \propT$ or
    $\typed[x : \baseT \to \baseT]{M}{\tau}$.
  \item Our running example ``$\fromP \; 0 \; (\fromFun \; 0)$'' is a
    first-order guarded atom in $\foGuardedAt$.
  \item The formulae in $\foSimpleAt$ may not contain recursion and higher-order
    features.
    However, the atoms of Horn clauses in a logic program fit in here.
  \end{enumerate}
\end{example}

\section{Coinductive Uniform Proofs}
\label{sec:uniform-proofs}


This section introduces the eight logics of the coinductive uniform
proof framework announced and motivated in the introduction.
The major difference of uniform proofs with, say, a sequent calculus is
the ``uniformity'' property, which means that the choice of the application of
each proof rule is deterministic and all proofs are in normal form (cut free).
This subsumes the operational semantics of resolution, in which the proof search
is always goal directed.
Hence, the main challenge, that we set out to solve in this section, is to
extend the uniform proof framework with coinduction, while preserving this
valuable operational property.

We begin by introducing the different goal formulae and definite clauses
that determine the logics that were presented in the cube for coinductive
uniform proofs in the introduction.
These clauses and formulae correspond directly to those of the original work on
uniform proofs~\cite{MN12} with the only difference being that we need to
distinguish atoms with and without fixed point terms.
The general idea is that goal formulae ($G$-formulae) occur on the right of
a sequent, thus are the \emph{goal} to be proved.
Definite clauses ($D$-formulae), on the other hand, are selected from the
context as assumptions.
This will become clear once we introduce the proof system for coinductive
uniform proofs.
\begin{definition}
  \label{def:formula-classes}
  Let $D_i$ be generated by the following grammar with $i \in \set{1,\omega}$.
  \begin{equation*}
    D_i \coloncolonequals
    \simpleAt[i] \mid G \impl D \mid  D \conj D \mid \all{x : \tau} D
  \end{equation*}
  The sets of definite clauses ($D$-formulae) and goals ($G$-formulae) of the
  four logics $\cofohc$, $\cofohh$, $\cohohc$, $\cohohh$ are the well-formed
  formulae of the corresponding shapes defined in \tabRef*{up}.
  For the variations $\cofohh_{\fix}$ etc. of these logics with
  fixed point terms, we replace upper index ``$s$'' with ``$g$'' everywhere
  in  \tabRef*{up}.
  \begin{table}[bt]
    \footnotesize{
      \renewcommand{\arraystretch}{1.5}
      \begin{tabular}{l|l|l}
        & \text{Definite Clauses} & \text{Goals}
        \\ \hline
        $\begin{alignedat}[t]{1}
          & \cofohc \\
          & \cohohc \\
          & \cofohh \\
          & \cohohh
        \end{alignedat}$
        &
        $\begin{aligned}[t]
          & D_1
          \\
          & D_\omega
          \\
          & D_1
          \\
          & D_\omega
        \end{aligned}$
        &
        $\begin{alignedat}[t]{2}
          G & \coloncolonequals
          \foGuardedAt & & \mid G \conj G \mid  G \disj G \mid \exist{x : \tau} G
          \\
          G & \coloncolonequals
          \guardedAt & & \mid G \conj G \mid G \disj G \mid \exist{x : \tau} G
          \\
          G & \coloncolonequals
          \foGuardedAt & & \mid G \conj G \mid  G \disj G \mid \exist{x : \tau} G
          \mid D \impl G \mid \all{x : \tau} G
          \\
          G & \coloncolonequals
          \guardedAt & & \mid G \conj G \mid  G \disj G
          \mid \exist{x : \tau} G \mid D \impl G \mid \all{x  : \tau} G
        \end{alignedat}$
      \end{tabular}
    }
    \caption{D- and G-formulae for coinductive uniform proofs.}
      \label{tab:up}
    \vspace*{-2em}
  \end{table}
  A $D$-formula of the shape
  $\all{\vv{x}} A_1 \conj  \dotsm \conj  A_n \impl A_0 $
  is called \emph{$H$-formula} or \emph{Horn clause} if $A_k \in \foSimpleAt$,
  and \emph{$H^g$-formula} if $A_k \in \foGuardedAt$.
  Finally, a \emph{logic program} (or \emph{program}) $P$ is a set of
  $H$-formulae.
  Note that any set of $D$-formulae in \emph{fohc} can be transformed into an
  intuitionistically equivalent set of $H$-formulae~\cite{MN12}.
\end{definition}


We are now ready to introduce the coinductive uniform proofs.
Such proofs are composed of two parts: an outer coinduction that has to be at
the root of a proof tree, and the usual the usual uniform proofs by
Miller et al.~\cite{Miller91:UniformProofs}.
The latter are restated in \figRef*{rules-UP}.
Of special notice is the rule $\Dec$ that mimics the operational behaviour of
resolution in logic programming, by choosing a clause $D$ from the given
program to resolve against.
The coinduction is started by the rule \Cofix{} in \figRef*{rules-CUP}.
Our proof system mimics the typical recursion with a guard condition found in
coinductive programs and proofs~\cite{Appel07:ModalTypeSystem,%
  Aczel03:InfTrees-CIT,Birkedal:GuardedRecUniverseFP,Coq94,Gimenez98}.
This guardedness condition is formalised by applying the guarding modality
$\langle \_ \rangle$ on the formula being proven by coinduction and the
proof rules that allow us to distribute the guard over certain logical
connectives, see \figRef*{rules-CUP}.
The guarding modality may be discharged only if the guarded goal was resolved
against a clause in the initial program or any hypothesis, except for the
coinduction hypotheses. 
This is reflected in the rule $\DecG$, where we may only pick a clause from
$P$, and is in contrast to the rule $\Dec$, in which we can pick \emph{any}
hypothesis.
The proof may only terminate with the $\Init$ step if the goal is no longer
guarded.


Note that the $\Cofix$ rule introduces a goal as a new hypothesis.
Hence, we have to require that this goal is also a definite clause.
Since coinduction hypotheses play such an important role,
they deserve a separate definition.
\begin{definition}
  Given a language $L$ from \tabRef*{up}, a formula $\varphi$ is a
  \emph{\CoindGoal} of $L$ if $\varphi$ simultaneously is a
  $D$- and a $G$-formula of $L$.
\end{definition}

Note that the \CoindGoal* of \cofohc{} and \cofohh{} can be transformed into
equivalent $H$- or $H^g$-formulae, since any \CoindGoal is a $D$-formula.

Let us now formally introduce the coinductive uniform proof system.

\begin{definition}
  Let $P$ and $\Delta$ be finite sets of, respectively, definite clauses and
  \CoindGoal*, over the signature $\TSig$, and suppose that $G$ is a goal
  and $\varphi$ is a \CoindGoal.
  A \emph{sequent} is either a \emph{uniform provability sequent} of the form
  $\SequentUP{G}$ or $\SequentUPb{D}$ as defined in \figRef*{rules-UP}, or it
  is a \emph{coinductive uniform provability sequent} of the form $\SequentCUP$
  as defined in \figRef*{rules-CUP}.
  Let $L$ be a language from \tabRef*{up}.
  We say that $\varphi$ is \emph{coinductively provable} in $L$, if $P$ is a
  set of $D$-formulae in $L$, $\varphi$ is a \CoindGoal in $L$ and $\SequentCUP$
  holds.
\end{definition}


\begin{figure}[bt]
  \begin{spreadlines}{7pt}
    \begin{empheq}[box=\fbox]{gather*}
      \def\ScoreOverhang{1pt}
      \def\defaultHypSeparation{\hskip .05in}
      \def\labelSpacing{2pt}
      \AxiomC{$\SequentUPb{D}$}
			\AxiomC{$D\in P\cup\Delta$}
      \RightLabel{\Dec}
      \BinaryInfC{$\SequentUP{A}$}
      \bottomAlignProof
      \DisplayProof
      \quad
      \def\ScoreOverhang{1pt}
      \def\labelSpacing{2pt}
      \AxiomC{$A \conv A'$}
      \RightLabel{\Init}
      \UnaryInfC{$\SequentUPb{A'}$}
      \bottomAlignProof
      \DisplayProof
      \quad
      \def\ScoreOverhang{1pt}
      \def\labelSpacing{2pt}
      \AxiomC{}
      \RightLabel{$\TopR$}
      \UnaryInfC{$\SequentUP{\top}$}
      \bottomAlignProof
      \DisplayProof
      \\
      \AxiomC{$\SequentUPb{D}$}
      \AxiomC{$\SequentUP{G}$}
      \RightLabel{$\ImplL$}
      \BinaryInfC{$\SequentUPb{G\impl D}$}
      \DisplayProof
      \quad
      \AxiomC{$\SequentUP[P,D]{G}$}
      \RightLabel{$\ImplR$}
      \UnaryInfC{$\SequentUP{D\impl G}$}
      \DisplayProof
      \\
      \AxiomC{$\SequentUPb{D_x}$}
      \AxiomC{$x \in \set{1,2}$}
      \RightLabel{$\AndL$}
      \BinaryInfC{$\SequentUPb{D_1\conj D_2}$}
      \DisplayProof
      \quad
      \AxiomC{$\SequentUP{G_1}$}
      \AxiomC{$\SequentUP{G_2}$}
      \RightLabel{$\AndR$}
      \BinaryInfC{$\SequentUP{G_1\conj G_2}$}
      \DisplayProof
      \\
      \def\defaultHypSeparation{\hskip .1in}
      \AxiomC{$\SequentUPb{D\subst{N/x}}$}
      \AxiomC{$\guarded[\EMPTY]{N}{\tau}$}
      \RightLabel{$\AllL$}
      \BinaryInfC{$\SequentUPb{\all{x}D}$}
      \DisplayProof
      \quad
      \def\defaultHypSeparation{\hskip .1in}
      \AxiomC{$\SequentUP(c:\tau,\TSig){G\subst{c/x}}$}
			\AxiomC{$c:\tau\notin \TSig$}
      \RightLabel{$\AllR$}
      \BinaryInfC{$\SequentUP{\all{x:\tau}G}$}
      \DisplayProof
      \\
      \AxiomC{$\SequentUP{G\subst{N/x}}$}
      \AxiomC{$\guarded[\EMPTY]{N}{\tau}$}
      \RightLabel{$\ExistsR$}
      \BinaryInfC{$\SequentUP{\exist{x:\tau}G}$}
      \DisplayProof
      \quad
      \AxiomC{$\SequentUP{G_x}$}
      \AxiomC{$x \in \set{1,2}$}
      \RightLabel{$\OrR$}
      \BinaryInfC{$\SequentUP{G_1\disj G_2}$}
      \DisplayProof
    \end{empheq}
  \end{spreadlines}
    \caption{Uniform Proof Rules}
    \label{fig:rules-UP}
  \end{figure}
	
	
  \begin{figure}[bt]
  \begin{spreadlines}{7pt}
    \begin{empheq}[box=\fbox]{gather*}
		  \AxiomC{$\SequentUPg[P][\varphi]{\varphi}$}
      \RightLabel{$\Cofix$}
      \UnaryInfC{$\SequentCUP$}
      \DisplayProof
      \\
      \def\defaultHypSeparation{\hskip .05in}
      \AxiomC{$\SequentUPb{D}$}
      \AxiomC{$D \in P$}
      \RightLabel{$\DecG$}
      \BinaryInfC{$\SequentUPg{A}$}
      \DisplayProof
      \quad
      \def\defaultHypSeparation{\hskip .1in}
      \AxiomC{$\SequentUPg(c:\tau,\TSig){\varphi\subst{c/x}}$}
      \AxiomC{$c:\tau\notin \TSig$}
      \RightLabel{$\AllRG$}
      \BinaryInfC{$\SequentUPg{\all{x:\tau}\varphi}$}
      \DisplayProof
      \\
      \AxiomC{$\SequentUPg{\varphi_1}$}
      \AxiomC{$\SequentUPg{\varphi_2}$}
      \RightLabel{$\ConjRG$}
      \BinaryInfC{$\SequentUPg{\varphi_1\conj \varphi_2}$}
      \DisplayProof
      \quad
     \AxiomC{$\sequentUPg{\Sigma}{P}{\Delta, \varphi_1}{\varphi_2}$}
      \RightLabel{$\ImplRG$}
      \UnaryInfC{$\SequentUPg{\varphi_1\impl \varphi_2}$}
      \DisplayProof
    \end{empheq}
  \end{spreadlines}
  \caption{Coinductive Uniform Proof Rules}
  \label{fig:rules-CUP}
\end{figure}



The logics we have introduced impose different syntactic restrictions on
$D$- and $G$-formulae, and will therefore admit coinduction goals of different strength.
This ability to explicitly use stronger coinduction hypotheses within a goal-directed search
was missing in CoLP, for example. And it
allows us to account for different coinductive properties of Horn clauses as
described in the introduction.
We finish this section by illustrating this strengthening.

The first example is one for the logic \cofohc{}, in which we illustrate the
framework on the problem of type class resolution.
\begin{example}
  Let us restate the Haskell type class inference problem discussed in the
  introduction in terms of Horn clauses:
  \begin{alignat*}{2}
    \SwapAboveDisplaySkip
    \kappa_{\eqBase} &:    \eq \ \eqBase  \\
    \kappa_{\odd}  &: \all{x} \eq \ x \conj \eq \ (\even \ x  )  && \impl  \eq \ (\odd\ x) \\
    \kappa_{\even} &: \all{x} \eq \ x \conj \eq \ (\odd \ x) && \impl
    \eq \ (\even \ x )
  \end{alignat*}

  To prove $\eq\ (\odd \ \eqBase)$ for this set of Horn clauses, it is
  sufficient to use this formula directly as coinduction hypothesis,
  as shown in \figRef*{cofohc Haskell}.
  Note that this formula is indeed a \CoindGoal of \cofohc{},
  hence we find ourselves in the simplest scenario of coinductive proof search.
  In \tabRef*{herbrand-overview},
  $\gamma_1$ is a representative for this kind of coinductive proofs with simplest atomic goals.

  \begin{figure}
    \footnotesize{
\begin{prooftree}
  \AxiomC{}
  	\RightLabel{$\Init$}
   \UnaryInfC{$\sequentUPb{\TSig}{P}{\varphi}{\varphi}{\eq\ (\odd \ \eqBase)}$}
        \RightLabel{$\Dec$}
         \UnaryInfC{$\vdots$}
        \RightLabel{$\AllL$}
        \UnaryInfC{$\sequentUPb{\TSig}{P}{\varphi}{\kappa_{\even}}{\eq\ (\even \ \eqBase)}$}
        \RightLabel{$\Dec$}
        \UnaryInfC{$\sequentUPb{\TSig}{P}{\varphi}{}{\eq\ (\even \ \eqBase)}$}
        \UnaryInfC{$\spadesuit$}
\end{prooftree}
\vspace*{-2em}
\begin{prooftree}
	\AxiomC{}
	\RightLabel{$\Init$}
	\UnaryInfC{$\sequentUPb{\TSig}{P}{\varphi}{  \eq\ (\odd \ \eqBase)}{\eq\ (\odd \ \eqBase)}$}
          \AxiomC{}
          \RightLabel{$\Init$}
          	\UnaryInfC{$\sequentUPb{\TSig}{P}{\varphi}{\kappa_{\eqBase}}{\eq\ \ \eqBase}$}
           \RightLabel{$\Dec$}
	\UnaryInfC{$\sequentUPb{\TSig}{P}{\varphi}{}{\eq\ \ \eqBase}$}
        \AxiomC{$\spadesuit$}
        \RightLabel{$\land R$}
  \BinaryInfC{$\sequentUP{\TSig}{P}{\varphi}{\eq \ \eqBase \land \  \eq\ (\even \ \eqBase  )}$}
	\RightLabel{$\ImplL$}
	\BinaryInfC{$\sequentUPb{\TSig}{P}{\varphi}{ \eq \ \eqBase \land \  \eq\ (\even \ \eqBase  )  \impl \eq\ (\odd \ \eqBase)}{\eq\ (\odd \ \eqBase)}$}
	\RightLabel{$\AllL$}
	\UnaryInfC{$\sequentUPb{\TSig}{P}{\varphi}{\kappa_{\odd}}{\eq\ (\odd \ \eqBase)}$}
	\RightLabel{$\DecG$}
	\UnaryInfC{$\sequentUP{\TSig}{P}{\varphi}{\guardCUP{\eq\ (\odd \ \eqBase)}}$}
	\RightLabel{\Cofix}
	\UnaryInfC{$\sequentCUP{\TSig}{P}{\eq\ (\odd \ \eqBase)}$}
\end{prooftree}
}
  \vspace*{-1em}
  \caption{The \cofohc{} proof for Horn clauses arising from Haskell Type class
    examples.
    $\varphi$ abbreviates the coinduction hypothesis $\eq\ (\odd \ \eqBase)$.
    Note its use in the branch $\spadesuit$.}
  \label{fig:cofohc Haskell}
  \vspace*{-1em}
  \end{figure}

  It was pointed out in~\cite{FKS15} that Haskell's type class inference can
  also give rise to  irregular corecursion.
  Such cases may require the more general coinduction hypothesis (e.g. universal
  and/or implicative) of \cofohh{} or \cohohh{}.
  The below set of Horn clauses is a simplified representation of a problem
  given in~\cite{FKS15}:
  \begin{alignat*}{2}
    \SwapAboveDisplaySkip
    \kappa_{\eqBase} &:  \eq \; \eqBase\\
    \kappa_{s}         &: \all{x} (\eq \; x) \conj \eq \; (s \; (g \; x))
    && \impl \eq \; (s \; x )\\
    \kappa_{g}         &: \all{x} \eq \; x
    && \impl \eq \; (g \; x )
  \end{alignat*}

  Trying to prove $\eq \; (s \; \eqBase)$ by using $\eq \; (s \; \eqBase)$
  directly as a coinduction hypothesis is deemed to fail, as the coinductive
  proof search is irregular and this coinduction hypothesis would not be
  applicable in any guarded context.
  But it is possible to prove $\eq \; (s \; \eqBase)$ as a corollary of
  another theorem: $\all{x}(\eq \; x) \impl \eq \; (s \; x)$.
  Using this formula as coinduction hypothesis leads to a successful proof,
  which we omit here.
  From this more general goal, we can derive the original goal by instantiating
  the quantifier with $\eqBase$ and eliminating the implication with
  $\kappa_{\eqBase}$.
  This second derivation is sound with respect to the models, as we show
  in \thmRef*{cut theorem infi model}.
\end{example}

We encounter $\gamma_2$ from \tabRef*{herbrand-overview} in a similar situation:
To prove $p \, a$, we first have to prove $\all{x}p \ x$ in  \cofohh{}, and
then obtain $p \ a$ as a corollary by appealing
to \thmRef*{cut theorem infi model}.
The next example shows that we can cover all cases in
\tabRef*{herbrand-overview} by providing a proof in $\cohohh_{\fix}$ that
involves irregular recursive terms.

\begin{example}
  \label{ex:from-cup-proof}
  Recall the clause
  $\all{x \; y}  \fromP \; (s\; x) \; y \impl \fromP \; x \; (\scons \; x \; y)$
  that we named $\kappa_{\fromP}$ in the introduction.
  Proving $\exist{y} \fromP \; 0 \; y$ is again not possible directly.
  Instead, we can use the term
  $\fromFun = \fix[f] \lam{x} \scons \; x \; (f \; (s \; x))$ from
  \exRef*{guarded-terms} and prove $\all{x} \fromP \; x \; (\fromFun \; x)$
  coinductively, as shown in \figRef*{LandscapeFigure cohohh}.
  This formula  gives a coinduction hypothesis of sufficient generality.
  Note that the correct coinduction hypothesis now requires the fixed point
  definition of an infinite stream of successive numbers and universal
  quantification in the goal.
  Hence the need for the richer language of $\cohohh_{\fix}$.
  From this more general goal we can derive our initial goal
  $\exists\ y. \fromP \; 0 \; y$ by instantiating $y$ with $\fromFun \; 0$.

  \begin{figure}
    \footnotesize{
      \[
        \infer[]{
          \spadesuit
        }{
          \infer[\Dec]
          {
            \sequentUP{c,\TSig}{P}{\varphi}{\fromPred{(s\ c)}{(\fromFun \ (s \ c))}}
          }
          {
            \infer[\AllL]
            {
              \sequentUPb{c,\TSig}{P}{\varphi}{\varphi}{\fromPred{(s\ c)}{(\fromFun \ (s \ c))}}
            }
            {
              \infer[\Init]
              {
                \sequentUPb{c,\TSig}{P}{\varphi}{\fromPred{(s\ c)}{(\fromFun \ (s \ c))}}{\fromPred{(s\ c)}{(\fromFun \ (s \ c))}}
              }
              {}
            }
          }
        }
      \]
      \vspace*{-1em}
      \[\infer[\Cofix]
        {
          \sequentCUP{\TSig}{P}{\all{x}\fromPred{x}{(\fromFun \ x)}}
        }
        {
          \infer[\AllRG]
          {
            \sequentUP{\TSig}{P}{\varphi}{\guardCUP{\all{x}\fromPred{x}{(\fromFun \ x)}}}
          }
          {
            \infer[\DecG]
            {
              \sequentUP{c,\TSig}{P}{\varphi}{\guardCUP{\fromPred{c}{(\fromFun \ c)}}}
            }
            {
              \infer[\AllL \left(\text{2 times}\right)]
              {
                \sequentUPb{c,\TSig}{P}{\varphi}{  \kappa_{\fromP}}{\fromPred{c}{(\fromFun \ c)}}
              }
              {
                \infer[\ImplL]
                {
                  \sequentUPb{c,\TSig}{P}{\varphi}{
                    \fromP \, (s\, c) \, (\fromFun \ (s \ c))
                    \impl \fromP \, c \, (\scons \, c \, (\fromFun \ (s \ c)))
                  }{\fromPred{c}{(\fromFun \ c)}}
                }
                {
                  \infer[\Init]
                  {
                    \sequentUPb{c,\TSig}{P}{\varphi}{\fromPred{c}{(\sconsFunc{c}{(\fromFun \ (s \ c))})} }{\fromPred{c}{(\fromFun \ c)}}
                  }
                  {}
                  &
                  \spadesuit
                }
              }
            }
          }
        }\]	

	
    }
    \vspace*{-1em}
    \caption{The $\cohohh_{\fix}$ proof for
      $\varphi = \all{x} \fromP \; x \; (\fromFun \; x)$.
      Note that
      the last step of the leftmost branch involves
      ${\fromPred{c}{(\sconsFunc{c}{(\fromFun \ (s \ c))})} }\conv {\fromPred{c}{(\fromFun \ c)}}	$.
    }
    \vspace*{-1em}
    \label{fig:LandscapeFigure cohohh}
  \end{figure}
\end{example}

There are examples of coinductive proofs that require a fixed point
definition of an infinite stream, but do not require the syntax of higher-order
terms or hereditary Harrop formulae.
Such proofs can be performed in the $\cofohc_{\fix}$ logic.
A good example is a proof that the stream of zeros satisfies the Horn clause
theory defining the predicate $\stream$ in the introduction.
The goal $(\stream\  s_{0})$, with $s_{0} =  \fix[x] \scons \; 0 \; x$ can be
proven directly by coinduction. 
Similarly, one can type self-application with the infinite type
$a = \fix[t] t \to b$ for some given type $b$.
The proof for $\typedP \; [x:a] \; (\appP \; x \; x) \; b$ is then in
$\cofohc_{\fix}$.
Finally, the clause $\gamma_3$ is also in this group.
More generally, circular unifiers obtained from CoLP's~\cite{GuptaBMSM07} loop
detection yield immediately guarded fixed point terms, and thus CoLP
corresponds to coinductive proofs in the logic $\cofohc_{\fix}$.
A general discussion of Horn clause theories that describe infinite objects was
given in~\cite{KL17}, where the above logic programs were identified as being
productive.





\section{Coinductive Uniform Proofs and Intuitionistic Logic}
\label{sec:loeb-translation}

In the last section, we introduced the framework of coinductive uniform proofs,
which gives an operational account to proofs for coinductively interpreted
logic programs.
Having this framework at hand, we need to position it in the existing ecosystem
of logical systems.
The goal of this section is to prove that coinductive uniform proofs are in
fact constructive.
We show this by first introducing an extension of intuitionistic first-order
logic that allows us to deal with recursive proofs for coinductive predicates.
Afterwards, we show that coinductive uniform proofs are sound relative to this
logic by means of a proof tree translation.
The model-theoretic soundness proofs for both logics will be provided in
 \secRef{soundness}.


We begin by introducing an extension of intuitionistic first-order logic
with the so-called \emph{later modality}, written $\later$.
This modality is the essential ingredient that allows us to equip proofs
with a controlled form of recursion.
The later modality stems originally from provability logic, which characterises
transitive, well-founded Kripke
frames~\cite{Clouston15:SequentCalcToposOfTrees,%
  Solovay1976:ProvabilityModalLogic},
and thus allows one to carry out induction without an explicit induction
scheme~\cite{Beklemishev1999:ParameterFreeInduction}.
Later, the later modality was picked up by the type-theoretic community to
control recursion in coinductive programming~\cite{%
  Appel07:ModalTypeSystem,Atkey13:GuardedRec,Bizjak16:GuardedDTT,%
  Mogelberg14:GuardedRec,Nakano00:ModalityRec},
mostly with the intent to replace syntactic guardedness checks for coinductive
definitions by type-based checks of well-definedness.

Formally, the logic $\iFOLm$ is given by the following definition.
\begin{definition}
  \label{def:loeb-logic}
  The formulae of $\iFOLm$ are given by \defRef*{formulae}
  and the rule:
  \begin{equation*}
    \def\defaultHypSeparation{\hskip .05in}
    \AxiomC{$\validForm{\varphi}$}
    \UnaryInfC{$\validForm{\later \varphi}$}
    \bottomAlignProof
    \DisplayProof
  \end{equation*}
  Conversion extends to these formulae in the obvious way.
  Let $\varphi$ be a formula and $\Delta$ a sequence of formulae in $\iFOLm$.
  We say $\varphi$ is
  \emph{provable in context $\Gamma$ under the assumptions $\Delta$}
  in $\iFOLm$, if $\inferFOL{\varphi}$ holds.
  The \emph{provability relation} $\vdash$ is thereby given
  inductively by the rules
  in \figRef*{rules-std-connectives} and \figRef*{rules-later}.
  \begin{figure}[bt]
  \begin{spreadlines}{7pt}
    \begin{empheq}[box=\fbox]{gather*}
      \def\defaultHypSeparation{\hskip .1in}
      \AxiomC{$\validForm{\Delta}$}
      \AxiomC{$\varphi \in \Delta$}
      \RightLabel{\Proj}
      \BinaryInfC{$\inferFOL{\varphi}$}
      \DisplayProof
      \quad
      \def\defaultHypSeparation{\hskip .1in}
      \AxiomC{$\inferFOL{\varphi'}$}
      \AxiomC{$\varphi \conv \varphi'$}
      \RightLabel{\Conv}
      \BinaryInfC{$\inferFOL{\varphi}$}
      \DisplayProof
      \quad
      \AxiomC{$\validForm{\Delta}$}
      \RightLabel{\TopI}
      \UnaryInfC{$\inferFOL{\top}$}
      \DisplayProof
      \\
      \AxiomC{$\inferFOL{\varphi}$}
      \AxiomC{$\inferFOL{\psi}$}
      \RightLabel{\AndI}
      \BinaryInfC{$\inferFOL{\varphi \conj \psi}$}
      \DisplayProof
      \quad
      \AxiomC{$\inferFOL{\varphi_1 \conj \varphi_2}$}
      \AxiomC{$i \in \set{1,2}$}
      \RightLabel{\AndE[i]}
      \BinaryInfC{$\inferFOL{\varphi_i}$}
      \DisplayProof
      \\
      \def\defaultHypSeparation{\hskip .1in}
      \AxiomC{$\inferFOL{\varphi_i}$}
      \AxiomC{$\validForm{\varphi_j}$}
      \AxiomC{$j \ne i$}
      \RightLabel{\OrI[i]}
      \TrinaryInfC{$\inferFOL{\varphi_1 \disj \varphi_2}$}
      \DisplayProof
      \quad
      \AxiomC{$\inferFOL[\Delta, \varphi_1]{\psi}$}
      \AxiomC{$\inferFOL[\Delta, \varphi_2]{\psi}$}
      \RightLabel{\OrE}
      \BinaryInfC{$\inferFOL[\Delta, \varphi_1 \disj \varphi_2]{\psi}$}
      \DisplayProof
      \\
      \AxiomC{$\inferFOL[\Delta, \varphi]{\psi}$}
      \RightLabel{\ImplI}
      \UnaryInfC{$\inferFOL{\varphi \impl \psi}$}
      \DisplayProof
      \quad
      \AxiomC{$\inferFOL{\varphi \impl \psi}$}
      \AxiomC{$\inferFOL{\varphi}$}
      \RightLabel{\ImplE}
      \BinaryInfC{$\inferFOL{\psi}$}
      \DisplayProof
      \\
      \AxiomC{$\inferFOL(\Gamma, x : \tau){\varphi}$}
      \AxiomC{$x \not\in \Gamma$}
      \RightLabel{\AllI}
      \BinaryInfC{$\inferFOL{\all{x : \tau} \varphi}$}
      \DisplayProof
      \quad
      \def\defaultHypSeparation{\hskip .1in}
      \AxiomC{$\inferFOL{\all{x : \tau} \varphi}$}
      \AxiomC{$M : \tau \in \GuardedTerms{\TSig}(\Gamma)$}
      \RightLabel{\AllE}
      \BinaryInfC{$\inferFOL{\varphi \subst{M/x}}$}
      \DisplayProof
      \\
      \def\ScoreOverhang{1pt}
      \def\defaultHypSeparation{\hskip .1in}
      \AxiomC{$M : \tau \in \GuardedTerms{\TSig}(\Gamma)$}
      \AxiomC{$\inferFOL{\varphi \subst{M/x}}$}
      \RightLabel{\ExI}
      \BinaryInfC{$\inferFOL{\exist{x : \tau} \varphi}$}
      \DisplayProof
      \quad
      \def\ScoreOverhang{1pt}
      \def\defaultHypSeparation{\hskip .1in}
      \AxiomC{$\validForm{\psi}$}
      \AxiomC{$\inferFOL(\Gamma, x : \tau)[\Delta, \varphi]{\psi}$}
      \AxiomC{$x \not\in \Gamma$}
      \RightLabel{\ExE}
      \TrinaryInfC{$\inferFOL[\Delta, \exist{x : \tau} \varphi]{\psi}$}
      \DisplayProof
    \end{empheq}
  \end{spreadlines}
    \caption{Intuitionistic Rules for Standard Connectives}
    \label{fig:rules-std-connectives}
  \end{figure}
  \begin{figure}[bt]
  \begin{empheq}[box=\fbox]{gather*}
    \AxiomC{$\inferFOL{\varphi}$}
    \RightLabel{\Next}
    \UnaryInfC{$\inferFOL{\later \varphi}$}
    \DisplayProof
    \quad
    \AxiomC{$\inferFOL{\later (\varphi \impl \psi)}$}
    \RightLabel{\Mon}
    \UnaryInfC{$\inferFOL{\later \varphi \impl \later \psi}$}
    \DisplayProof
    \quad
    \AxiomC{$\inferFOL[\Delta, \later \varphi]{\varphi}$}
    \RightLabel{\FP}
    \UnaryInfC{$\inferFOL{\varphi}$}
    \DisplayProof
  \end{empheq}
    \caption{Rules for the Later Modality}
    \label{fig:rules-later}
  \end{figure}
\end{definition}

The rules in \figRef*{rules-std-connectives} are the usual rules for
intuitionistic first-order logic and should come at no surprise.
More interesting are the rules in \figRef*{rules-later}, where the
rule \FP{} introduces recursion into the proof system.
Furthermore, the rule \Mon{} allows us to to distribute the later
modality over implication, and consequently over conjunction and
universal quantification.
This is essential in the translation in
\thmRef*{CUP-iFOLm-sound Version II} below.
Finally, the rule \Next{} gives us the possibility to proceed
without any recursion, if necessary.

Note that so far it is not possible to use the assumption
$\later \varphi$ introduced in the \FP{}-rule.
The idea is that the formulae of a logic program provide us the
obligations that we have to prove, possibly by recursion, in order
to prove a coinductive predicate.
This is cast in the following definition.
\begin{definition}
  \label{def:guarding}
  Given an $H^g$-formula $\varphi$ of the shape
  $\all{\vv{x}} (A_1 \conj \dotsm \conj A_n) \impl \psi$, we define
  its \emph{guarding} $\guard{\varphi}$ to be
  $\all{\vv{x}} (\later A_1 \conj \dotsm \conj \later A_n) \impl \psi$.
  For a logic program $P$, we define its guarding $\guard{P}$
  by guarding each formula in $P$.
\end{definition}

\begin{toappendix}
The following admissible rules are easily derivable in the logic $\iFOLm$ and
are essential in showing soundness of $\cohohh_{\fix}$ with respect to $\iFOLm$.
\begin{lemma}
  \label{lem:technical-derivations}
  \begin{equation*}
    \AxiomC{$\validForm{\varphi}$}
    \AxiomC{$\varphi \conv \psi$}
    \BinaryInfC{$\validForm{\psi}$}
    \bottomAlignProof
    \DisplayProof
    \quad
    \AxiomC{$\inferFOL{\varphi}$}
    \UnaryInfC{$\validForm{\varphi}$}
    \bottomAlignProof
    \DisplayProof
    \quad
    \AxiomC{$\inferFOL{\varphi}$}
    \AxiomC{$x : \tau \not\in \Gamma$}
    \RightLabel{\Weak}
    \BinaryInfC{$\inferFOL(\Gamma,x:\tau){\varphi}$}
    \bottomAlignProof
    \DisplayProof
  \end{equation*}
  \begin{gather*}
    \AxiomC{$\inferFOL[\later \varphi_1 \impl \later \varphi_2]{\psi}$}
    \RightLabel{\MonL}
    \UnaryInfC{$\inferFOL[\later(\varphi_1 \impl \varphi_2)]{\psi}$}
    \DisplayProof
    \\[7pt]
    \AxiomC{$\inferFOL{\later(\varphi \conj \psi)}$}
    \RightLabel{\LaterPresConjR}
    \UnaryInfC{$\inferFOL{\later \varphi \conj \later \psi}$}
    \DisplayProof
    \quad
    \AxiomC{$\inferFOL[\Delta, \later \varphi_1 \conj \later \varphi_2]{\psi}$}
    \RightLabel{\LaterPresConjL}
    \UnaryInfC{$\inferFOL[\later(\varphi_1 \conj \varphi_2)]{\psi}$}
    \DisplayProof
    \\[7pt]
    \AxiomC{$\inferFOL{\later(\all{x : \tau} \varphi)}$}
    \RightLabel{\LaterPresAllR}
    \UnaryInfC{$\inferFOL{\all{x : \tau} \later\varphi}$}
    \DisplayProof
    \quad
    \AxiomC{$\inferFOL[\Delta, \all{x : \tau} \later \varphi]{\psi}$}
    \RightLabel{\LaterPresAllL}
    \UnaryInfC{$\inferFOL[\Delta, \later(\all{x : \tau} \varphi)]{\psi}$}
    \DisplayProof
    \\[7pt]
    \AxiomC{$\inferFOL{\later \varphi \conj \later \psi}$}
    \UnaryInfC{$\inferFOL{\later(\varphi \conj \psi)}$}
    \DisplayProof
    \quad
    \AxiomC{$\inferFOL{\all{x : \tau} \later\varphi}$}
    \UnaryInfC{$\inferFOL{\later(\all{x : \tau} \varphi)}$}
    \DisplayProof
  \end{gather*}
\end{lemma}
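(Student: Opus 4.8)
The statement is a list of admissible or derivable rules for $\iFOLm$, and I would prove each of them in turn; the work is essentially routine, and I would order it so that the three structural facts come first. Closure of well\nobreakdash-formedness under conversion — ``$\validForm{\varphi}$ and $\varphi \conv \psi$ imply $\validForm{\psi}$'' — is an induction on the generation of $\varphi \conv \psi$: conversion of formulae is the congruence closure of term conversion inside atoms together with compatibility with $\conj,\disj,\impl,\forall,\exists$ and $\later$, so every case is immediate from the induction hypothesis except the atom case, which is exactly subject reduction and confluence for the term calculus, already recorded after \defRef{reductions}. The second fact, ``$\inferFOL{\varphi}$ implies $\validForm{\varphi}$'' (and $\validForm{\psi}$ for every $\psi \in \Delta$), is an induction on the provability derivation: the formula formation rules are syntax directed, so they supply the required inversion in the elimination cases; the \Conv{} case invokes the previous item; and the \AllE{} and \ExI{} cases need that substituting a guarded term preserves well\nobreakdash-formedness, which is a short induction on the formula using \lemRef{full-guarded-subst}. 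Weakening, ``$\inferFOL{\varphi}$ implies $\inferFOL(\Gamma,x:\tau){\varphi}$'', is again an induction on the derivation; the only delicate point is that adding the fresh $x:\tau$ to the context may clash with an eigenvariable of an \AllI{} or \ExE{} step, which is handled by the standard prior renaming of eigenvariables, and the guarded\nobreakdash-term side conditions of \AllE{} and \ExI{} survive because enlarging the context only enlarges $\GuardedTerms{\TSig}(\Gamma)$.

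For the rules concerning $\later$ I would first note that a cut principle is trivially available: from $\inferFOL{\varphi}$ and $\inferFOL[\Delta,\varphi]{\psi}$ one obtains $\inferFOL{\psi}$ via \ImplI{} followed by \ImplE{}. With this, every ``move $\later$ outwards'' rule follows from a single recipe: prove the relevant purely intuitionistic implication, apply \Next{} to bring it under $\later$, apply \Mon{}, and eliminate. Thus $\vdash (\varphi \conj \psi) \impl \varphi$ yields $\vdash \later(\varphi \conj \psi) \impl \later\varphi$, and symmetrically for $\psi$, so an application of the conjunction introduction rule after eliminating against the premise gives \LaterPresConjR{}; instantiating $\all{x:\tau}\varphi$ at the guarded term $x$ and running the same recipe gives \LaterPresAllR{}; and \MonL{} is \Mon{} applied to the identity entailment $\later(\varphi_1 \impl \varphi_2) \vdash \later(\varphi_1 \impl \varphi_2)$ followed by a cut against its premise. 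The ``left'' versions \LaterPresConjL{} and \LaterPresAllL{} then reduce to \LaterPresConjR{} and \LaterPresAllR{} by weakening the resulting right\nobreakdash-to\nobreakdash-left entailment into $\Delta$ and cutting against the premise.

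The only genuinely interesting rules are the two ``move $\later$ inwards'' directions. For $\later\varphi \conj \later\psi \vdash \later(\varphi \conj \psi)$ I would curry: from $\vdash \varphi \impl \psi \impl (\varphi \conj \psi)$ one gets $\vdash \later\varphi \impl \later(\psi \impl (\varphi \conj \psi))$ by \Next{} and \Mon{}, feeding in the hypothesis $\later\varphi$ leaves $\later(\psi \impl (\varphi \conj \psi))$, to which \Mon{} applies a second time, and feeding in $\later\psi$ concludes. This ``iterate \Mon{}'' trick does not directly transfer to $\all{x:\tau}\later\varphi \vdash \later(\all{x:\tau}\varphi)$, since one cannot curry over a domain that need not be finite, and I expect this to be the main obstacle: it is a Barcan\nobreakdash-style principle for $\later$. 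I would try to recover it by running the curried argument around a fresh eigenvariable — use \AllE{} to expose $\later\varphi$, re\nobreakdash-assemble the $\later$, and close with \AllI{} — and, should that not go through from \Next{} and \Mon{} alone, by an application of \FP{}, since Löb recursion is the one thing available beyond the bare modal structure; failing even that, the principle holds in all intended (constant\nobreakdash-domain) models and could be retained as a primitive. Making this last argument cohere with the later modality, rather than any of the other cases, is where I anticipate the actual content of the lemma to lie.
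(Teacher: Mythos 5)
Your handling of the first three structural facts and of \MonL{}, \LaterPresConjR{}, \LaterPresConjL{}, \LaterPresAllR{}, \LaterPresAllL{} and of the unlabelled conjunction rule ($\later\varphi \conj \later\psi$ entails $\later(\varphi\conj\psi)$ via currying) is correct and is essentially the paper's own argument: the paper likewise disposes of the structural items by induction and of all the modal items with exactly your recipe, namely implication introduction/elimination together with \Next{} and the monotonicity rule \Mon{}; your derivations are just the worked-out form of that one-liner, and your observation that the ``left'' rules only need the derivable direction $\later(\all{x:\tau}\varphi) \vdash \all{x:\tau}\later\varphi$ (respectively $\later(\varphi\conj\psi)\vdash\later\varphi\conj\later\psi$) plus cut is the right reduction.

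The gap is the final rule: from $\inferFOL{\all{x : \tau} \later\varphi}$ conclude $\inferFOL{\later(\all{x : \tau} \varphi)}$, which you do not actually prove. Your eigenvariable plan fails at the decisive step: after \AllE{} with a fresh variable $c$ you hold $\later(\varphi\subst{c/x})$, and \AllI{} only returns $\all{x:\tau}\later\varphi$ again --- ``re-assembling the $\later$'' outside the freshly generalised quantifier is precisely the inference being sought, and neither \Next{} (which would require $\all{x:\tau}\varphi$ outright) nor \Mon{} (which needs a $\later$-ed implication whose conclusion is $\all{x:\tau}\varphi$, fed by finitely many $\later$-ed premises) supplies it; the currying trick is inherently finitary, as you yourself note. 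The \FP{} fallback also stalls: taking $\psi = (\all{x:\tau}\later\varphi) \impl \later(\all{x:\tau}\varphi)$, the hypothesis $\later\psi$ together with \Next{} and \Mon{} yields only $\later\later(\all{x:\tau}\varphi)$, and $\later\later\chi \vdash \later\chi$ is not available in this system. Your last fallback --- keeping the principle as a primitive because it holds in the intended models (it does: $\later$ preserves universal quantification by \thmRef{chains-fo-fibrations-with-later}) --- changes the proof system rather than establishing the admissibility claim, so as a proof of the lemma as stated your proposal is incomplete at exactly this Barcan-style commutation. For comparison, be aware that the paper's own proof of this case is the same one-sentence appeal to implication rules and monotonicity of $\later$ that you use for the conjunction case, and it does not spell out how that recipe is supposed to cover the universal quantifier; your analysis correctly isolates this as the one place where the stated justification does not obviously go through.
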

\begin{proof}
  \begin{itemize}
  \item Preservation of well-formed formulae under conversion follows
    by induction on formulae from type preservation of reductions.
  \item The well-formedness of $\varphi$ follows from provability
    by induction on $\varphi$.
  \item Weakening is also given by induction on $\varphi$.
  \item The other rules follow from implication introduction and elimination,
    and monotonicity of $\later$.
    \qedhere
\end{itemize}
\end{proof}
\end{toappendix}

The translation given in \defRef*{guarding} of a logic program into formulae
that admit recursion corresponds unfolding a coinductive predicate,
cf.~\cite{Basold18:BreakingTheLoop}.
We show now how to transform a coinductive uniform proof tree into
a proof tree in $\iFOLm$, such that the recursion and guarding mechanisms
in both logics match up.
\begin{theoremrep}
  \label{thm:CUP-iFOLm-sound Version II}
  If $P$ is a logic program over a first-order signature $\TSig$
  and the sequent $\sequentCUP{\TSig}{P}{\varphi}$ is provable in
  $\cohohh_{\fix}$, then $\inferFOL()[\guard{P}]{\varphi}$ is provable
  in $\iFOLm$.
\end{theoremrep}
To prove this theorem, one uses that each coinductive uniform proof tree
starts with an initial tree that has an application of the $\Cofix$-rule at
the root and that eliminates the guard by using the rules in \figRef*{rules-CUP}.
At the leaves of this tree, one finds proof trees that proceed only by
means of the rules in \figRef*{rules-UP}.
The initial tree is then translated into a proof tree in $\iFOLm$ that
starts with an application of the \FP{}-rule, which corresponds to the
$\Cofix$-rule, and that simultaneously transforms the coinduction
hypothesis and applies introduction rules for conjunctions etc.
This ensures that we can match the coinduction hypothesis with the
guarded formulae of the program $P$.

\begin{appendixproof}
  We provide a sketch of the proof.
  First, we note that the coinduction goal in $\cohohh_{\fix}$ is given by the
  following grammar.
  \begin{equation*}
    \mathrm{CG} \coloncolonequals
    \guardedAt
    \mid \mathrm{CG} \impl \mathrm{CG}
    \mid  \mathrm{CG} \conj \mathrm{CG}
    \mid \all{x : \tau} \mathrm{CG}
  \end{equation*}
  Thus, a coinduction goal is the restriction of FOL to implication, conjunction
  and universal quantification.
  Note that such a coinduction goal is intuitionistically equivalent to a
  conjunction of Horn-clauses.

  Assume that we are given a uniform proof tree $T$.
  We translate this tree into a proof tree $T'$ in $\iFOLm$.
  The proof proceeds in the following steps.
  \begin{enumerate}
  \item The first step of a proof tree $T$ starting in
    $\sequentCUP{\Sigma}{P}{\varphi}$ must be an application of the \Cofix{}
    rule to a proof tree $T_1$ ending in
    $\sequentUP{\Sigma}{P}{\varphi}{\guardCUP{\varphi}}$.
    This step can be directly translated into an application of the \Lob{} rule.
    Hence, if $T'_1$ is the translation of $T_1$ with conclusion
    $\inferFOL()[\guard{P}, \later \varphi]{\varphi}$, then
    $T'$ is given by applying \FP{} to $T'_1$, thereby obtaining a proof tree
    ending in the desired sequent
    $\inferFOL()[\guard{P}]{\varphi}$.
  \item The next step must then be either $\AllRG$, $\ConjRG$, $\ImplRG$
    or $\DecG$.
    To prove this by induction on the proof tree, we need to define coinduction
    goal contexts.
    These are contexts $\varphi [-]$ with a hole $[-]$, such that plugging an
    atom from $\simpleAt[\omega]$ into the hole yields a coinduction goal.
    More generally, we will need contexts with multiple holes $[-]_i$ that are
    indexed from $0$ to $n$ for some $n \in \N$.
    Formally, such contexts are given by the following grammar.
    \begin{align*}
      H & \coloncolonequals
      [-]_i
      \mid [-]_i \impl H
      \mid \all{x : \tau} H
      \\
      C & \coloncolonequals
      H
      \mid C \conj C
    \end{align*}
    Let $C$ be a context, we write $C[\vv{\varphi}]$ for the
    formula that arises by replacing the holes $[-]_i$ by $\varphi_i$.
    Note that this may result in binding of free variables in
    $\varphi_i$ and $\psi$.

    We prove by induction on proof trees that for any context $\Gamma$,
    any set of formulae $P$, any context $C$ and any proof for
    $\sequentUP{\Sigma, \Gamma}{P}{C[\varphi], \Delta}{\guardCUP{\varphi}}$
    that there is a proof for
    $\inferFOL(\Gamma)[\guard{P}, C[\later \varphi], \Delta]{\varphi}$.
    The translation for this step follows then by taking $\Gamma$ and $\Delta$
    to be empty and $C$ to be $[-]$.

    In the $\AllRG$ case, we get a proof tree for
    $\sequentUP{\Sigma, \Gamma}{P}{C[\all{x} \varphi], \Delta}{
      \guardCUP{\all{x} \varphi}}$
    that has the sequent
    $\sequentUP{\Sigma, \Gamma, x}{P}{C[\all{x} \varphi], \Delta}{
      \guardCUP{\varphi}}$
    as its only premise.
    By putting $C' = C[\all{x} [-]]$, this premise can be written as
    $\sequentUP{\Sigma, \Gamma, x}{P}{C'[\varphi], \Delta}{
      \guardCUP{\varphi}}$
    from which we obtain by induction a proof tree for
    $\inferFOL(\Gamma, x)[\guard{P}, C'[\later \varphi], \Delta]{\varphi}$.
    Using the derived rule \LaterPresAllL{} from \lemRef*{technical-derivations}
    and by the rule \AllI{}, we thus obtain a proof tree for the sequent
    $\inferFOL(\Gamma)[\guard{P}, C[\later (\all{x} \varphi)]]{
      \all{x} \varphi}$.

    For the cases $\ConjRG$ and $\ImplRG$, one proceeds similarly as
    for $\AllRG$ by appealing to the fact that $\later$ preserves conjunction
    and implication, respectively.
    The only things to be taken care of are the multi-contexts in the
    conjunction and the extension of $\Delta$ in the implication case.
    Finally, the $\DecG$ rule is dealt with in the next step.

  \item
    For an application of either of the decide rules, there are generally two
    cases to consider: either the clause $D$ is selected from $P \cup \Delta$ by
    $\DecG$ or $\Dec$, or $\Dec$ selects $C[\vv{A}]$.
    In both cases, we proceed by induction to analyse of the proof tree for
    $\sequentUPb{\Sigma, \Gamma}{P}{C[\vv{A}], \Delta}{D}{B}$.
    \todo[inline,disable]{Procedure for transforming such proofs is
      outlined in the picture in the wiki.}

    Define $H \colonequals C[\vv{\later A}]$.
    We then obtain the following cases from the fact that $D$ and $H$ are
    Horn-clauses with the later modality in specific places.
    \begin{enumerate}
    \item $D \in P$ is selected.
      Then the proof tree in $\iFOLm$ will have at its root
      $\inferFOL[\guard{P}, H, \Delta]{B}$ and at its leaves sequents of the form
      $\inferFOL[\guard{P}, H, \Delta]{\later C}$ for some atoms $C$.
    \item $C[\vv{A}]$ is selected.
      Then the resulting proof tree in $\iFOLm$ will have at its root
      $\inferFOL[\guard{P}, H, \Delta]{\later A_k}$ for some $k$,
      and as its leaves sequents of the form
      $\inferFOL[\guard{P}, H, \Delta]{\later A_i}$ for some $i$.
    \end{enumerate}
    Our goal is now to combine such proof trees.
    The only mismatch might occur when we have a proof tree that has
    $\inferFOL[\guard{P}, H]{B}$ as root (first case) that has to be
    attached to a leaf of another proof tree (from either case), which
    will be of the form $\inferFOL[\guard{P}, H]{\later C}$ for some atom $C$.
    Since this match arises from a uniform proof, we have that $C = B$.
    Hence, we can combine these two trees by appealing to the \Next{} rule:
    \begin{equation*}
      \AxiomC{$\vdots$}
      \UnaryInfC{$\inferFOL[\guard{P}, H]{B}$}
      \RightLabel{\Next}
      \UnaryInfC{$\inferFOL[\guard{P}, H]{\later B}$}
      \UnaryInfC{$\vdots$}
      \DisplayProof
    \end{equation*}
    In all the other cases, the trees can be combined directly.
    \qedhere
  \end{enumerate}
\end{appendixproof}

\begin{toappendix}
\begin{example}\label{ex:LOB TRAN FROM}

  Recall the following clause from the introduction.
\begin{equation}
\all{x}\all{t} \fromPred{(s\ x)}{t} \impl \fromPred{x}{(\sconsFunc{x}{t})}
\label{eq:horn clause: from}
\end{equation}

In \exRef*{from-cup-proof}, we provided the CUP proof for
$\all{x} \fromP \; x \; (\fromFun \; x)$.
In this example, we show how that proof is translated in a proof in $\iFOLm$.

The guarding of clause \eqRef{horn clause: from}
is given by the clause \eqRef{FOL clause: from}.
\begin{equation}
  \label{eq:FOL clause: from}
  \all{x}\all{t} \later (\fromPred{(s\ x)}{t})
   \impl \fromPred{x}{(\sconsFunc{x}{t})} \\
\end{equation}

To save space, when we build a proof in $\iFOLm$ using
\AllI, \AllE{} or \Conv, etc., we may omit the condition branch, which
is $x : \tau \notin \Gamma$, $\typed{M}{\tau}$ or $\psi \conv \psi'$
respectively, if and only if we know that the condition holds.

Now let $\guard{P}$ denote the singleton set of clause \eqRef{FOL clause: from}.
In \figRef*{FOL proof of from v2} we display the $\iFOLm$ proofs for
$\inferFOL()[\guard{P}]{\all{x}\fromPred{x}{(\fromFun \; x)}}$ that arises
from the CUP proof.
\end{example}

\begin{figure}[bt]
  \begin{prooftree}
    \AxiomC{}
    \RightLabel{\Proj}
    \UnaryInfC{$\inferFOL(x )[\Delta]{
        \all{x}\all{t} \later\fromPred{(s\ x)}{t} \impl
        \fromPred{x}{(\sconsFunc{x}{t})}}$}
    \RightLabel{\AllE, \AllE}
    \UnaryInfC{$\inferFOL(x)[\Delta]{
        \later\fromPred{(s\ x)}{(\fromFun \;{(s\ x)})} \impl
        \fromPred{x}{(\sconsFunc{c}{(\fromFun \;{(s\ x)})})}}$}
    \AxiomC{$\spadesuit$}
    \RightLabel{\ImplE}
    \BinaryInfC{$\inferFOL(x )[\Delta]{
        \fromPred{x}{(\sconsFunc{x}{(\fromFun \;{(s\ x)})})}}$}
    \RightLabel{\Conv}
    \UnaryInfC{$\inferFOL(x )[\Delta]{
        \fromPred{x}{(\fromFun \;{x})}}$}
		\RightLabel{\AllI}
		\UnaryInfC{$\inferFOL()[\guard{P}, \all{x}\later\fromPred{x}{(\fromFun \;{x})}]{
        \all{x}\fromPred{x}{(\fromFun \;{x})}}$}		
    \RightLabel{\LaterPresAllL}
    \UnaryInfC{$\inferFOL()[\guard{P},\later (\all{x}\fromPred{x}{(\fromFun \;{x})})]{
        \all{x}\fromPred{x}{(\fromFun \;{x})}}$}
    \RightLabel{\FP}
    \UnaryInfC{$\inferFOL()[\guard{P}]{
        \all{x}\fromPred{x}{(\fromFun \;{x})}}$}
  \end{prooftree}

\dotfill $\ \spadesuit\ $\dotfill

\begin{prooftree}
  \AxiomC{}
  \RightLabel{\Proj}
  \UnaryInfC{$\inferFOL(x )[\Delta]{
      \all{x} \later(\fromPred{x}{(\fromFun \;{x})})}$}
  \RightLabel{\AllE}
  \UnaryInfC{$\inferFOL(x )[\Delta]{
      \later\fromPred{(s \ x)}{(\fromFun \;{(s\ x)})}}$}
\end{prooftree}

\caption{$\iFOLm$ proof for \protect\exRef{LOB TRAN FROM}.
  $\Delta$ abbreviates $\guard{P}, \all{x} \later (\fromPred{x}{(\fromFun \;{x})})$.}
\label{fig:FOL proof of from v2}
\end{figure}
\end{toappendix}

The results of this section show that it is irrelevant whether the guarding
modality is used on the right (CUP-style) or on the left ($\iFOLm$-style),
as the former can be translated into the latter.
However, CUP uses the guarding on the right to preserve proof uniformity,
whereas $\iFOLm$ extends a general sequent calculus.
Thus, to obtain the reverse translation, we would have to have an admissible
cut rule in CUP.
The main ingredient to such a cut rule is the ability to prove several
coinductive statements simultaneously.
This is possible in CUP by proving the conjunction of these statements.
Unfortunately, we cannot eliminate such a conjunction into one of its components,
since this would require non-deterministic guessing in the proof construction,
which in turn breaks uniformity.
Thus, we leave a solution of this problem for future work.



\section{Herbrand Models and Soundness}
\label{sec:soundness}

In \secRef*{loeb-translation} we showed that coinductive uniform proofs are
sound relative to the intuitionistic logic $\iFOLm$.
This gives us a handle on the constructive nature of coinductive uniform proofs.
Since $\iFOLm$ is a non-standard logic, we still need to provide semantics for
that logic.
We do this by interpreting in \secRef*{soundness-loeb-herbrand} the formulae of
$\iFOLm$ over the well-known (complete) Herbrand models and prove the soundness
of the accompanying proof system with respect to these models.
Although we obtain soundness of coinductive uniform proofs over Herbrand models
from this, this proof is indirect and does not give a lot of information
about the models captured by the different calculi \cofohc{} etc.
For this reason, we will give in \secRef*{soundness-cup-herbrand} a direct
soundness proof for coinductive uniform proofs.
We also obtain coinduction invariants from this proof for each of the calculi,
which allows us to describe their proof strength.

\subsection{Coinductive Herbrand Models and Semantics of Terms}
\label{sec:coalg-sem-guarded-terms}

Before we come to the soundness proofs, we introduce in this section
(complete) Herbrand models by using the terminology of final coalgebras.
We then utilise this description to give operational and denotational semantics
to guarded terms.
These semantics show that guarded terms allow the description and computation
of potentially infinite trees.

The coalgebraic approach has been proven very successful both in
logic and programming~\cite{Abbott06:Container,
  Turner95:StrongFP,vdBerg07:Non-wellfoundedTrees}.
We will only require very little category theoretical vocabulary and assume that
the reader is familiar with the category $\SetC$ of sets and functions,
and functors, see for example~\cite{BarrWells95:CatTheoryComputing,%
  Borceux08:Handbook1,Lambek-Scott88:HOCL}.
The terminology of algebras and
coalgebras~\cite{Aczel02:AlgCoalg,Jacobs16:IntroCoalg,%
  Rutten00:UniversalCoalgebra,%
  Sangiorgi2011:IntroCoind}
is given by the following definition.

\begin{definition}
  \label{def:coalg}
  A \emph{coalgebra} for a functor $F \from \SetC \to \SetC$ is a
  map $c \from X \to FX$.
  Given coalgebras $d : Y \to FY$ and $c \from X \to FX$, we say that
  a map $h : Y \to X$ is a \emph{homomorphism} $d \to c$ if
  $Fh \comp d = c \comp h$.
  We call a coalgebra
  $c \from X \to FX$  \emph{final}, if for every coalgebra
  $d$ there is a unique homomorphism $h \from d \to c$.
  We will refer to $h$ as the \emph{coinductive extension} of $d$.
\end{definition}

The idea of (complete) Herbrand models is that a set of Horn clauses determines
for each predicate symbol a set of potentially infinite terms.
Such terms are (potentially infinite) trees, whose nodes are labelled
by function symbols and whose branching is given by the arity of these function
symbols.
To be able to deal with open terms, we will allow such trees to have leaves
labelled by variables.
Such trees are a final coalgebra for a functor determined by the signature.
\begin{definition}
  \label{def:signature-extension}
  Let $\TSig$ be first-order signature.
  The \emph{extension} of a first-order signature $\TSig$ is a
  (polynomial) functor~\cite{GambinoKock:PolyFunctors}
  $\ext{\TSig} : \SetC \to \SetC$ given by
  \begin{equation*}
    \ext{\TSig}(X) = \Coprod*[f \in \TSig] X^{\ar(f)},
  \end{equation*}
  where $\ar \from \TSig \to \N$ is defined in
  \secRef*{terms-formulae} and $X^n$ is the $n$-fold product of $X$.
  We define for a set $V$ a functor
  $\ext{\TSig} + V \from \SetC \to \SetC$ by
  $(\ext{\TSig} + V)(X) = \ext{\TSig}(X) + V$,
  where $+$ is the coproduct (disjoint union) in $\SetC$.
\end{definition}

To make sense of the following definition, we note that we can view $\PSig$ as
a signature and we thus obtain its extension $\PSigExt$.
Moreover, we note that the final coalgebra of $\TSigExt + V$ exists because
$\TSigExt$ is a polynomial functor.
\begin{definition}
  \label{def:herbrand-stuff}
  Let $\TSig$ be a first-order signature.
  The \emph{coterms} over $\TSig$ are given by a final coalgebra
  $\rootO_V \from \coterms{\TSig}(V) \to \TSigExt(\coterms{\TSig}(V)) + V$.
  For brevity, we denote the coterms with no variables,
  i.e. $\coterms{\TSig}(\emptyset)$, by
  $\rootO \from \coterms{\TSig} \to \TSigExt(\coterms{\TSig})$,
  and call it the \emph{(complete) Herbrand universe}
  and its elements \emph{ground} coterms.
  Finally, we let
  the \emph{(complete) Herbrand base} $\coHBase$
  be the set $\PSigExt(\coterms{\TSig})$.


\end{definition}

The construction
$\coterms{\TSig}(V)$ gives rise to
a functor
$\coterms{\TSig} \from \SetC \to \SetC$, called
the \emph{free completely iterative monad}~\cite{Aczel03:InfTrees-CIT}.
If there is no ambiguity, we will drop the injections
$\inj_i$ when describing elements of $\coterms{\TSig}(V)$.
Note that $\coterms{\TSig}(V)$ is final with property that for every
$s \in \coterms{\TSig}(V)$ either there are $f \in \TSig$ and
$\vv{t} \in (\coterms{\TSig}(V))^{\ar(f)}$ with $\rootO_V(s) = f(\vv{t})$, or
there is $x \in V$ with $\rootO_V(s) = x$.
Finality allows us to specify unique maps into $\coterms{\TSig}(V)$ by giving
a coalgebra $X \to \TSigExt(X) + V$.
In particular, one can define
for each $\theta \from V \to \coterms{\TSig}$ the substitution $t[\theta]$ of
variables in the coterm $t$ by $\theta$ as the coinductive extension
of the following coalgebra.
\begin{equation*}
  \coterms{\TSig}(V)
  \xrightarrow{\rootO_V} \TSigExt(\coterms{\TSig}(V)) + V
  \xrightarrow{\copair{\id, \rootO \comp \theta}} \TSigExt(\coterms{\TSig}(V))
\end{equation*}

Now that we have set up the basic terminology of coalgebras, we can give
semantics to guarded terms from \defRef*{guarded-terms}.
The idea is that guarded terms guarantee that we can always compute with them
so far that we find a function symbol in head position,
see~\lemRef*{guarded-computation}.
This function symbol determines then the label and branching of a node in the
tree generated by a guarded term.
If the computation reaches a constant or a variable, then we stop creating
the tree at the present branch.
This idea is captured by the following lemma.
\begin{lemmarep}
  \label{lem:guarded-term-interpretation}
  There is a map $\foSem{-} \from \GuardedFOTerms{\TSig}(\Gamma) \to \coterms{\TSig}(\Gamma)$
  that is unique with
  \begin{enumerate}
  \item if $M \conv N$, then $\foSem{M} = \foSem{N}$, and
  \item for all $M$, if $M \reduceIter f \> \vv{N}$ then
    $\rootO_{\Gamma} \parens*{\foSem{M}} = f\parens[\big]{\vv{\foSem{N}}}$,
    and if $M \reduceIter x$ then $\rootO_{\Gamma} \parens*{\foSem{M}} = x$.
  \end{enumerate}
\end{lemmarep}
\begin{appendixproof}
  We define a coalgebra
  $c \from \quot{\GuardedFOTerms{\TSig}(\Gamma)}{\conv}
  \to \TSigExt\parens*{\quot{\GuardedFOTerms{\TSig}(\Gamma)}{\conv}} + \Gamma$
  on the quotient of guarded terms by convertibility as follows.
  \begin{equation*}
    c[M] =
    \begin{cases}
      f [\vv{N}], & \text{ if } M \reduceIter f \> \vv{N} \\
      x,  & \text{ if } M \reduceIter x
  \end{cases}
  \end{equation*}
  This is a well-defined map by \lemRef*{guarded-computation}.
  By finality of $\coterms{\TSig}(\Gamma)$, we obtain a unique homomorphism
  $h \from \quot{\GuardedFOTerms{\TSig}(\Gamma)}{\conv} \to \coterms{\TSig}(\Gamma)$.
  This allows us to define $\foSem{-} = h \comp [-]$, which gives
  us immediately for $M \conv N$ that
  $\foSem{M} = h [M] = h [N] = \foSem{N}$.
  Moreover, we have
  \begin{equation*}
    \begin{split}
      \rootO(\foSem{M})
      & = \rootO(h [M]) \\
      & = (\TSigExt(h) + \id)(c [M]) \\
      & =
      \begin{cases}
        \TSigExt(h)(f \> [\vv{N}])
        = f \> \vv{h[N]}
        = f \> \vv{\foSem{N}},
        & \text{ if } M \reduceIter f \> \vv{N}
        \\
        \id(x)
        = x,
        & \text{ if } M \reduceIter x
      \end{cases}
    \end{split}
  \end{equation*}
  Finally, assume that we are given a map
  $k \from \GuardedFOTerms{\TSig}(\Gamma) \to \coterms{\TSig}(\Gamma)$ with the
  above two properties.
  The first allows us to lift $k$ to a map
  $k' \from \quot{\GuardedFOTerms{\TSig}(\Gamma)}{\conv} \to \coterms{\TSig}(\Gamma)$
  with $k' \comp [-] = k$.
  Due to the second property we know that $k'$ is then a coalgebra
  homomorphism and by finality $k' = h$.
  Hence, we obtain from $\foSem{-} = h \comp [-] = k' \comp [-] = k$ that
  $\foSem{-}$ is unique.
  \qedhere
\end{appendixproof}

\begin{toappendix}
Let us illustrate the semantics of guarded terms on our running example.
\begin{example}
  Recall $\fromFun$ from \exRef*{guarded-terms} and
  note that
  $\fromFun \; 0 \reduceIter \scons \; 0 \; (\fromFun \; (s \; 0))$.
  Hence, we have
  $\rootO(\foSem{\fromFun \; 0})
  = \scons \; \foSem{0} \; \foSem{\fromFun \; (s \; 0)}$.
  If we continue unfolding $\foSem{\fromFun \; 0}$, then we obtain the
  infinite tree
  $\scons \; 0
  \longrightarrow \scons \; (s \; 0)
  \longrightarrow \scons \; (s \; (s \; 0))
  \longrightarrow \dotsm$.
\end{example}
\end{toappendix}

\subsection{Interpretation of Basic Intuitionistic First-Order Formulae}
\label{sec:interpretation-iFOL}

In this section, we give an interpretation of the formulae in
\defRef*{formulae}, in which we restrict ourselves to guarded terms.
This interpretation will be relative to models in the complete Herbrand
universe.
Since we later extend these models to Kripke models to be able to handle the
later modality, we formulate these models already now in the language of
fibrations~\cite{Benabou1985:FibredCats,Jacobs1999-CLTT}.

\begin{definition}
  Let $p \from \TCat \to \BCat$ be a functor.
  Given an object $I \in \BCat$, the \emph{fibre} $\TCat_I$ above $I$ is the
  category of objects $A \in \TCat$ with $p(A) = I$ and morphisms
  $f \from A \to B$ with $p(f) = \id_I$.
  The functor $p$ is a \emph{(split) fibration} if for every morphism
  $u \from I \to J$ in $\BCat$ there is functor
  $\reidx{u} \from \TCat_J \to \TCat_I$, such that
  $\reidx{\id_I} = \Id_{\TCat_I}$ and
  $\reidx{(v \comp u)} = \reidx{u} \comp \reidx{v}$.
  We call $\reidx{u}$ the \emph{reindexing along $u$}.
\end{definition}


To give an interpretation of formulae, consider the following category
$\PredC$.
\begin{equation*}
  \PredC = \CatDescr{
    (X, P) \text{ with } X \in \SetC \text{ and } P \subseteq X
  }{
    f \from (X, P) \to (Y, Q) \text{ is a map }
    f \from X \to Y
    \text{ with } f(P) \subseteq Q
  }
\end{equation*}
The functor $\predF \from \PredC \to \SetC$ with $\predF(X, P) = X$ and
$\predF(f) = f$ is a split fibration, see~\cite{Jacobs1999-CLTT},
where the reindexing functor for $f \from X \to Y$ is given by taking preimages:
$\reidx{f}(Q) = f^{-1}(Q)$.
Note that each fibre $\PredC_X$ is isomorphic to the complete lattice of
predicates over $X$ ordered by set inclusion.
Thus, we refer to this fibration as the \emph{predicate fibration}.

Let us now expose the logical structure of the predicate fibration.
This will allow us to conveniently interpret first-order formulae over this
fibration, but it comes at the cost of having to introduce a good amount of
category theoretical language.
However, doing so will pay off in \secRef*{soundness-loeb-herbrand}, where
we will construct another fibration out of the predicate fibration.
We can then use category theoretical results to show that this new fibration
admits the same logical structure and allows the interpretation of the later
modality.

The first notion we need is that of fibred products, coproducts and exponents,
which will allow us to interpret conjunction, disjunction and implication.
\begin{definition}
  A fibration $p \from \TCat \to \BCat$ has \emph{fibred finite products}
  $(\Fin, \times)$, if each fibre $\TCat_I$ has finite products
  $(\Fin_I, \times_I)$ and these are preserved by reindexing:
  for all $f \from I \to J$, we have $\reidx{f}(\Fin_J) = \Fin_I$
  and $\reidx{f}(A \times_J B) = \reidx{f}(A) \times_I \reidx{f}(B)$.
  Fibred finite coproducts and exponents are defined analogously.
\end{definition}

The fibration $\predF$ is a so-called first-order fibration, which allows us to
interpret first-order logic, see~\cite[Def.~4.2.1]{Jacobs1999-CLTT}.
\begin{definition}
  A fibration $p \from \TCat \to \BCat$ is a
  \emph{first-order fibration} if%
  \footnote{Technically, the quantifiers should also fulfil the
    Beck-Chevalley and Frobenius conditions, and the fibration should
    admit equality.
    Since these are fulfilled in all our models and we do not need
    equality, we will not discuss them here.}
  \begin{itemize}
  \item $\BCat$ has finite products and the
    fibres of $p$ are preorders;
  \item $p$ has fibred finite products $(\top, \conj)$ and
    coproducts $(\bot, \disj)$ that distribute;
  \item $p$ has fibred exponents $\impl$; and
  \item $p$ has existential and universal quantifiers
    $\exists_{I,J} \dashv \reidx{\pi_{I,J}} \dashv \forall_{I,J}$
    for all projections $\pi_{I,J} \from I \times J \to I$.
  \end{itemize}
  A \emph{first-order $\lambda$-fibration} is a first-order fibration
  with Cartesian closed base $\BCat$.
\end{definition}


The fibration $\predF \from \PredC \to \SetC$ is a first-order
$\lambda$-fibration, as all its fibres are posets and $\SetC$ is
Cartesian closed;
$\predF$ has fibred finite products $(\top, \cap)$, given by
$\top_X = X$ and intersection;
fibred distributive coproducts $(\emptyset, \cup)$;
fibred exponents $\predImpl$, given by
$(P \predImpl Q)
= \setDef{\vv{t}}{\text{if } \vv{t} \in P \text{, then } \vv{t} \in Q}$; and
universal and existential quantifiers given for
$P \in \PredC_{X \times Y}$ by
\begin{equation*}
  \forall_{X,Y} P
  = \setDef*{x \in X}{\all{y \in Y} (x,y) \in P}
  \quad
  \exists_{X,Y} P
  = \setDef*{x \in X}{\exist{y \in Y} (x,y) \in P}.
\end{equation*}


The purpose of first-order fibrations is to capture the essentials
of first-order logic, while the $\lambda$-part takes care of higher-order
features of the term language.
In the following, we interpret types, contexts, guarded terms and
formulae in the fibration $\predF \from \PredC \to \SetC$:
We define for types $\tau$ and context $\Gamma$ sets $\sem{\tau}$
and $\sem{\Gamma}$; for guarded terms $M$ with $\typed{M}{\tau}$ we define a map
$\sem{M} \from \sem{\Gamma} \to \sem{\tau}$ in $\SetC$; and for
a formula $\validForm{\varphi}$ we give a predicate
$\sem{\varphi} \in \PredC_{\sem{\Gamma}}$.

\begin{toappendix}
\begin{remark}
  It should be noted that we give in the following an interpretation over
  concrete fibrations with their base over $\SetC$.
  However, the interpretations could also be given over general first-order
  $\lambda$-fibrations $p \from \TCat \to \BCat$.
  The main issues is to get an interpretation of guarded terms over a final
  coalgebra for $\TSigExt$ in a general category $\BCat$.
  Currently, this interpretation crucially requires the category of sets
  as base category, see \lemRef*{guarded-term-interpretation}.
\end{remark}
\end{toappendix}

The semantics of types and contexts are given inductively in the
Cartesian closed category $\SetC$, where the base type
$\baseT$ is interpreted as coterms, as follows.
\begin{align*}
  \sem{\baseT} & = \coterms{\TSig}
  & \sem{\emptyset} & = \Fin
  \\
  \sem{\tau \to \sigma} & = \sem{\sigma}^{\sem{\tau}}
  & \sem{\Gamma, x : \tau} & = \sem{\Gamma} \times \sem{\tau}
\end{align*}

We note that a coterm $t \in \coterms{\TSig}(V)$ can be seen as a map
$(\coterms{\TSig})^V \to \coterms{\TSig}$ by applying a
substitution in $(\coterms{\TSig})^V$ to $t$:
$\sigma \mapsto t[\sigma]$.
In particular, the semantics of a guarded first-order term
$M \in \GuardedFOTerms{\TSig}(\Gamma)$ is equivalently a map
$\foSem{M} \from \sem{\Gamma} \to \coterms{\TSig}$.
We can now extend this map inductively to
$\sem{M} \from \sem{\Gamma} \to \sem{\tau}$ for all guarded terms
$M \in \GuardedTerms{\TSig}(\Gamma)$ with $\typed{M}{\tau}$ by
\begin{align*}
  \sem{M}(\gamma)\parens[\big]{\vv{t}}
  & = \foSem{M \> \vv{x}}\parens[\big]{\brak[\big]{\vv{x} \mapsto \vv{t}}}
  && \guarded[]{M}{\tau}
  \text{ with } \ar(\tau) = \card[\big]{\vv{t}} = \card[\big]{\vv{x}} \\
  \sem{c}(\gamma)\parens[\big]{\vv{t}} & = c \; \vv{t} \\
  \sem{x}(\gamma) & = \gamma(x) \\
  \sem{M \; N}(\gamma) & = \sem{M}(\gamma)\parens[\big]{\sem{N}(\gamma)} \\
  \sem{\lam{x} M}(\gamma)(t) & = \sem{M}(\gamma[x \mapsto t])
\end{align*}

\begin{lemmarep}
  \label{lem:term-interpret-prefunctor}
  The mapping $\sem{-}$ is a well-defined function from guarded terms to
  functions, such that $\typed{M}{\tau}$ implies
  $\sem{M} \from \sem{\Gamma} \to \sem{\tau}$.
\end{lemmarep}
\begin{appendixproof}
  Immediate by induction on $M$.
\end{appendixproof}

Since $\predF \from \PredC \to \SetC$ is a first-order fibration, we can
interpret inductively all logical connectives of the formulae from
\defRef*{formulae} in this fibration.
The only case that is missing is the base case of predicate symbols.
Their interpretation will be given over a Herbrand model that is constructed
as the largest fixed point of an operator over all predicate interpretations
in the Herbrand base.
Both the operator and the fixed point are the subjects of the following
definition.

\begin{definition}
  \label{def:herbrand-model}
  We let the set of \emph{interpretations} $\Interp$ be the powerset
  $\pow{\coHBase}$ of the complete Herbrand base.
  For $I \in \Interp$ and $p \in \PSig$, we denote by $\I{p}$
  the interpretation of $p$ in $I$ (the fibre of $I$ above $p$)
  \begin{equation*}
    \I{p} = \setDef[\big]{\vv{t} \in \parens*{\coterms{\TSig}}^{\ar(p)}}{p(\vv{t}) \in I}.
  \end{equation*}
  Given a set $P$ of $H^g$-formulae, we define a monotone map
  $\ProgOp{P} \from \Interp \to \Interp$ by
  \begin{equation*}
    \ProgOp{P}(I) =
    \setDef{\foSem{\psi} [\theta]}{
      \parens*{\all{\vv{x}} \Conj*[k=1][n] \varphi_k \impl \psi} \in P,
      \theta \from \card{\vv{x}} \to \coterms{\TSig},
      \all{k} \foSem{\varphi_k} [\theta] \in I},
  \end{equation*}
  where $\foSem{-} [\theta]$ is the extension of semantics and
  substitution from coterms to
  the Herbrand base by functoriality of $\PSigExt$.
  The \emph{(complete) Herbrand model} $\Model[P]$ of $P$ is the largest fixed
  point of $\ProgOp{P}$, which exists because $\Interp$ is a complete lattice.
\end{definition}

\begin{toappendix}
  \begin{remark}
  Note that if $P$ is a set of Horn clauses (logic program), then
  the definition of the operator $\ProgOp{P}$ in \defRef*{herbrand-model} just
  becomes
  \begin{equation*}
    \ProgOp{P}(I) =
    \setDef{\psi [\theta]}{
      \parens*{\all{\vv{x}} \Conj*[k=1][n] \varphi_k \impl \psi} \in P,
      \theta \from \card{\vv{x}} \to \coterms{\TSig},
      \all{k} \varphi_k [\theta] \in I},
  \end{equation*}
  as we do not have to unfold fixed point terms.
  Thus, in most cases, except in the proof of \thmRef*{cut theorem infi model},
  we will drop the semantic brackets $\foSem{-}$.
\end{remark}
\end{toappendix}

Given a formula $\varphi$ with $\validForm{\varphi}$ that contains only
guarded terms, we define the semantics of $\varphi$ in $\PredC$ from an
interpretation $I \in \Interp$ inductively as follows.
\begin{align*}
   \SwapAboveDisplaySkip
  \sem{\validForm{p \> \vv{M}}}_I
  & = \reidx{\parens*{\vv{\sem{M}}}} (\I{p})
  \\
  \sem{\validForm{\top}}_I
  & = \top_{\sem{\Gamma}}
  \\
  \sem{\validForm{\varphi \binbox \psi}}_I
  & = \sem{\validForm{\varphi}}_I \binbox \sem{\validForm{\psi}}_I
  & \binbox \in \set{\conj, \disj, \impl}
  \\
  \sem{\validForm{\bind{Q}{x : \tau} \varphi}}_I
  & = Q_{\sem{\Gamma}, \sem{\tau}} \;
  \sem{\validForm[\Gamma, x : \tau]{\varphi}}_I
  & Q \in \set{\forall, \exists}
\end{align*}

\hnote[disable]{The reindexing should be explained in the long version}

\begin{lemmarep}
  \label{lem:formula-interpret-resp-typing}
  The mapping $\sem{-}_I$ is a well-defined function from formulae to predicates,
  such that $\validForm{\varphi}$ implies
  $\sem{\varphi}_I \subseteq \sem{\Gamma}$ or, equivalently,
  $\sem{\varphi}_I \in \PredC_{\sem{\Gamma}}$.
\end{lemmarep}
\begin{appendixproof}
  Immediate by induction on $\varphi$.
\end{appendixproof}

\begin{toappendix}
Let us demonstrate the interpretation of formulae on an example.
\begin{example}
  Recall the formula
  $\all{x \> y} \fromPred{(s \> x)}{y} \impl \fromPred{x}{(scons \> x \> y)}$,
  which we introduced as clause $\kappa_{\fromP 0}$.
  We spell out the interpretation of this formula.
  Note that $\rootO(\sem{s \> x}) = s \> \sem{x} = s \> x$.
  Abusing notation, we write $s \> u$ for $\sem{s \> x}\subst{u/x}$,
  and analogously for the terms $y$, $x$ and $scons \> x \> y$.
  We then have
  \begin{align*}
    \sem{\fromPred{(s \> x)}{y}}_I
    & = \reidx{(\sem{s \> x}, \sem{y})}(\I{\fromP})
    \\
    & = \setDef{(u,v) \in (\coterms{\TSig})^2}{(s \> u, v) \in \I{\fromP}}
  \end{align*}
  Using similar calculations for the other terms in the clause $\kappa_{\fromP 0}$,
  we obtain
  \begin{align*}
    \sem{\kappa_{\fromP 0}}_I
    & =
    \sem{\all{x \> y} \fromPred{(s \> x)}{y} \impl \fromPred{x}{(scons \> x \> y)}}_I
    \\
    & = \forall_1 \forall_2 \> \parens{
      \sem{\fromPred{(s \> x)}{y}}_I \predImpl \sem{\fromPred{x}{(scons \> x \> y)}}_I}
    \\
    & = \setDef{\ast}{
      \all{u,v}
      \text{if } (s \> u, v) \in \I{\fromP}
      \text{, then } (u, scons \> u \> v) \in \I{\fromP}
    }
  \end{align*}
  As expected, we thus have $\sem{\kappa_{\fromP 0}}_I = \set{\ast}$ if $I$
  validates $\kappa_{\fromP 0}$.
\end{example}
\end{toappendix}

This concludes the semantics of types, terms and formulae.
We now turn to show that coinductive uniform proofs are sound for this
interpretation.





\subsection{Soundness of Coinductive Uniform Proofs for Herbrand Models}
\label{sec:soundness-cup-herbrand}

In this section, we give a direct proof of soundness for the coinductive
uniform proof system from \secRef*{uniform-proofs}.
Later, we will obtain another soundness result by combining the proof
translation from \thmRef*{CUP-iFOLm-sound Version II} with the soundness of
$\iFOLm$ (\thmRef*{loeb-sound-programs} and \ref{thm:chains-sound}).
The purpose of giving a direct soundness proof for uniform proofs is that
it allows the extraction of a coinduction invariant,
see \lemRef*{cup-soundness-invariant}.

The main idea is as follows.
Given a formula $\varphi$ and a uniform proof $\pi$ for
$\SequentCUP[\TSig][P][\varphi]$, we construct an interpretation
$I \in \Interp$ that validates $\varphi$, i.e. $\sem{\varphi}_I = \top$,
and that is contained in the complete Herbrand model $\Model[P]$.
Combining these two facts, we obtain that
$\sem{\varphi}_{\Model[P]} = \top$, and thus the soundness of uniform proofs.

To show that the constructed interpretation $I$ is contained in $\Model[P]$,
we use the usual coinduction proof principle:
\begin{definition}
  An \emph{invariant for $K \in \Interp$} is a set $I \in \Interp$,
  such that $K \subseteq I$ and $I$ is a $\ProgOp{P}$-invariant, that is,
  $I \subseteq \ProgOp{P}(I)$.
  If $t \in \coHBase$, we also say that $I$ is an invariant for $t$,
  if it is an invariant for $\set{t}$.
\end{definition}

Since $\Model[P]$ is the largest fixed point of $\ProgOp{P}$, we immediately
have that, if $K$ has an invariant, then $K \subseteq \Model[P]$, see also~\cite{KL18-2}.

In the remainder of this section, we will often have to refer to substitutions
by coterms and their composition.
The following definition will make dealing with this easier by organising these
substitutions into a (Kleisli-)category.
These notations are derived from the monad $(\coterms{\TSig}, \eta, \mu)$
with $\eta \from \Id \natTo \coterms{\TSig}$
and $\mu \from \coterms{\TSig}\coterms{\TSig} \natTo \coterms{\TSig}$,
cf.~\cite{Aczel03:InfTrees-CIT}:
\begin{definition}
  A \emph{(Kleisli-)substitution} $\theta$ from $V$ to $W$, written
  $\theta \from V \klTo W$, is a map
  $V \to \coterms{\TSig}(W)$.
  Composition of $\theta \from V \klTo W$ and
  $\delta \from U \klTo V$ is given by
  \begin{equation*}
    \theta \klComp \delta =
    U \xrightarrow{\delta} \coterms{\TSig}(V)
    \xrightarrow{\coterms{\TSig}(\theta)} \coterms{\TSig}(\coterms{\TSig}(W))
    \xrightarrow{\mu_{W}} \coterms{\TSig}(W).
\end{equation*}
\end{definition}

In what follows, we extract, for any instance of a formula $\varphi$, an explicit invariant
from a proof $\pi$
of $\SequentCUP[\TSig][P][\varphi]$, which then yields the soundness of CUP.
More precisely, let $\varphi$ be an $H^g$-formula with
$\varphi = \all{\vv{x}}A_1 \conj\dotsm \conj A_n \impl A_0$ and
let $X$ be the set of variables in $\vv{x}$.
Given a substitution $\theta \from X \klTo \emptyset$, we need to show that
if for all $1 \leq k \leq n$ we have $\foSem{A_k}[\theta] \in \Model[P]$, then
$\foSem{A_0}[\theta] \in \Model[P]$.
The remainder of this section is devoted to constructing an invariant
for $\foSem{A_0}[\theta]$.

We note that a uniform proof
$\pi$ for $\SequentCUP[\TSig][P][\varphi]$ starts with
\begin{prooftree}
  \AxiomC{$\ldots$}
  \UnaryInfC{$\SequentUPg(\vv{c}:\baseT,\TSig)[P][\varphi, hp]{A_0 [\vv{c}/\vv{x}]}$}
  \RightLabel{$\ImplRG$}
  \UnaryInfC{$\SequentUPg(\vv{c}:\baseT,\TSig)[P][\varphi]{A_1 \conj\dotsm \conj A_n \impl A_0 [\vv{c}/\vv{x}]}$}
  \AxiomC{$\vv{c} : \baseT \notin \TSig$}
  \RightLabel{$\AllRG$}
  \BinaryInfC{$\SequentUPg[P][\varphi]{\all{\vv{x}}A_1 \conj\dotsm \conj A_n \impl A_0}$}
  \RightLabel{$\Cofix$}
  \UnaryInfC{$\SequentCUP[\TSig][P][\all{\vv{x}}A_1 \conj\dotsm \conj A_n \impl A_0]$}
\end{prooftree}
where $\varphi = \all{\vv{x}}A_1 \conj\dotsm \conj A_n \impl A_0$,
$hp = (A_1 \conj\dotsm \conj A_n)[\vv{c}/\vv{x}]$, and the eigenvariables $\vv{c}$ are all distinct.
Let $C$ the set of variables in $\vv{c}$ and $\TSig^C$ the signature $\vv{c} : \baseT, \TSig$.
For brevity, we define $A^C_k = A_k\ssubst{\vv{c}/\vv{x}}$.
Note that from here, the further proof of the given goal will only be based on the signature
$\TSig^C$, that is, no new eigenvariables will be introduced higher in the proof.
Thus, we can focus on $A^C_0$ in our construction of an invariant:
Given a substitution $\theta_0 \from C \klTo \emptyset$, we need to construct
an invariant for $\foSem{A^C_0}[\theta_0]$, given that we already have an invariant for
the assumptions $\foSem{A^C_k}[\theta_0]$ with $1 \leq k \leq n$.

We need to refer to the levels of the proof $\pi$, which is the distance from the root sequent
$\SequentCUP[\TSig][P][\varphi]$.
For example, the above proof tree displays levels $0$ to $3$ of the proof $\pi$.

From here, the proof of $\langle A_0[\vv{c}/\vv{x}] \rangle$ can only proceed by applying
the rule $\DecG$, with a chosen  clause $\kappa$ from $P$.
If $\kappa$ is of the form $\all{\vv{y}}A$, then we define
$I = \setDef{\foSem{A}[\theta]}{\theta \text{ is a substitution}}$.
It is straightforward to show that this is an invariant.
Since $A_0[\vv{c}/\vv{x}] \conv A[\gamma]$ for some
substitution $\gamma$, we can find a substitution $\theta \from Y \klTo \emptyset$, such that
$\foSem{A_0^C}[\theta_0] = \foSem{A}[\theta]$
(put $\theta(y_i) = \foSem{\gamma(y_i)}[\theta_0]$).
Thus, we have that $\foSem{A_0^C}[\theta_0] \in I$, which
shows that $\foSem{A_0^C}[\theta_0] \in \Model[P]$.

Having considered this simple case, we will now analyse the case
when a chosen $\kappa$ is of the form $\all{\vv{y}} \psi \impl \psi'$, and
$A_0^C \conv \psi'[\gamma]$ for some substitution $\gamma$.
In this case, applications of  $\DecG$ and $\ImplL$ and \Init \ will eventually deliver the subgoal:

\begin{prooftree}
  \AxiomC{$\sequentUP{\TSig^C}{P}{\varphi, hp}{}{\psi[\gamma]}$}
  \UnaryInfC{\ldots}
 \end{prooftree} 

 We will refer to this stage in the proof $\pi$ as
 the \emph{level $l$ in $\pi$}.
  We now consider the rest of the proof tree above level $l$,  i.e. we consider the proof for $\psi[\gamma]$. This is where the non-trivial part of the invariant
 construction will be obtained.  In general, $\psi[\gamma]$ will be given by $A'_1 \conj \ldots \conj A'_m$,
 and the rule $\AndR$ will require  subproofs for each of $A'_j$. Let us consider a sub-proof of $\pi$ for an arbitrary such $A'_j$.
   

The proof can only proceed here by applying the rule \Dec, in which case there are three options: to choose a program clause from $P$, or choose  $hp$ or $\varphi$. Only the latter case is interesting for the invariant construction, as this is where
the coinductive goal $\varphi$ is instantiated, giving rise to a substitution that we will use in the invariant construction:

\begin{prooftree}
  \AxiomC{$\sequentUPb{\TSig^C}{P}{\varphi, hp}{A_1 \conj\dotsm \conj A_n \impl A_0 [\vv{N}/\vv{x}]}{A'_j}$}
  \RightLabel{$\AllL$}
  \UnaryInfC{$\sequentUPb{\TSig^C}{P}{\varphi, hp}{\varphi}{A'_j}$}
  \RightLabel{\Dec}
  \UnaryInfC{$\sequentUP{\TSig^C}{P}{\varphi, hp}{}{A'_j}$}
\end{prooftree}

Generally,  $\varphi$ may be used with different substitutions
multiple times within the proof tree
$\pi$. However, $\varphi$ itself is uniquely determined by the only use of the rule  $\Cofix$ in the root of $\pi$.
The above fragment of the proof for $A'_j$ gives rise to a substitution
$\rho_1 =  [\vv{N}/\vv{x}]$ that we can extend to a substitution
$\theta_1 \from C \klTo C$ by defining $\theta_1(c_j) = \sem{\rho_1(x_j)}$,
where $c_j$ is the eigenvariable that was initially substituted for $x_j$.
Since $\varphi$ is the goal of the coinductive proof $\pi$, we are lead to use
$\theta_1$ and its iterations in the invariant that will prove coinduction goal $\varphi$.

The notions in the following definition will allow us to easily organise and
iterate the substitutions that occur in a uniform proof. Recall that in general,
$\varphi$ can be used $n$ times in the proof $\pi$, giving rise to $n$ substitutions
$\theta_1, \ldots , \theta_n$.
The following abstract definition is motivated by this observation.

\begin{definition}
  Let $S$ be a set with $S = \set{1, \dotsc, n}$ for some $n \in \N$.
  We call the set $\lists{S}$ of lists over $S$ the set of
  \emph{substitution identifiers}.
  Suppose that we have substitutions $\theta_0 \from V \klTo \emptyset$ and
  $\theta_k \from V \klTo V$ for each $k \in S$.
  Then we can define a map
  $\Theta \from \lists{S} \to \parens*{\coterms{\TSig}}^V$,
  which turns each substitution identifier into a substitution, by iteration
  from the right:
  \begin{equation*}
    \Theta (\varepsilon) = \theta_0
    \quad \text{and} \quad
    \Theta (w : k) = \Theta(w) \klComp \theta_k
  \end{equation*}
\end{definition}



Coming back to the analysis of the proof $\pi$,
we assign to each substitution $\rho_i = [\vv{N}/\vv{x}]$ with $n \geq 1$,
which arises from a use of $\varphi$ in the proof tree $\pi$,
a substitution $\theta_i \from C \klTo C$ by
$\theta_i(c_j) = \sem{\rho_i(x_j)}$.
Note that each $N_j$ in $\rho_i$ has only variables from $C$, that is, $\Sigma^C \vdash N_j: \tau$.
We call $\theta_i$ an \emph{agent} of $\pi$.

We let $D \subseteq \foGuardedAt$ be the set of atoms that are proven in $\pi$:
\begin{equation*}
  D = \setDef{A}{
    \SequentUPg(\TSig^C){A} \text{ or } \SequentUP(\TSig^C){A}
    \text{ appears in } \pi
  }
\end{equation*}
From the agents and atoms in $\pi$ we extract an invariant
for the goal formula.
In the following lemma we take $S = \{ 1, \ldots , n\}$ to be the set of identifiers for
the $n$ uses of $\varphi$ in the given proof $\pi$.

\begin{lemmarep}
  \label{lem:cup-soundness-invariant}
  Suppose that $\varphi$ is an $H^g$-formula of the form
  $\all{\vv{x}}A_1 \conj\dotsm \conj A_n \impl A_0$
  and that there is a proof $\pi$ for $\SequentCUP[\TSig][P][\varphi]$.
  Let $D$ be the proven atoms in $\pi$, $\theta_1, \dotsc, \theta_n$ be
  the agents of $\pi$
  and $\theta_0 \from C \klTo \emptyset$ some initial substitution.
  Define $A^C_k = A_k\ssubst{\vv{c}/\vv{x}}$
  and
  suppose further that $I_1$ is an invariant for
  $\setDef*{\foSem{A^C_k}[\Theta(\varepsilon)]}{1\leq k \leq n}$.
  If we put
  \begin{equation*}
    I_2= \bigcup_{w \in \lists{S}}  \foSem{D} \ssubst*{\Theta(w)}
  \end{equation*}
  then $I = I_1 \cup I_2$ is an invariant for $\foSem{A^C_0} \ssubst*{\Theta(\varepsilon)}$.
\end{lemmarep}
\begin{appendixproof}
  This proof refers to the notation and the construction of the proof $\pi$ as
  given above.

  We first need show that  $\foSem{A^C_0} \ssubst*{\Theta(\varepsilon)} \in I$,
  which follows trivially from the fact that
  $ \foSem{A^C_0} \ssubst*{\Theta(\varepsilon)}  \in I_2$ by construction of $I_2$.
  It remains to show that $I \subseteq  \ProgOp{P}(I)$. We consider two cases: either
  $y \in I_1$ or $y \in I_2$. If the former, then $y \in  \ProgOp{P}(I)$ by
  definition of $I_1$ as an invariant and monotonicity of $\ProgOp{P}$.

  Consider the case  when $y \in I_2$, i.e. when $y \in  \foSem{D} \ssubst*{\Theta(w)}$
  for some $w$, and therefore $y = \foSem{B}  \ssubst*{\Theta(w)}$ for some atom $B$
  proven in $\pi$.
  We have to show that $y \in  \ProgOp{P}(I)$, that is, we have to show that
  there is a clause $\all{\vv{y}}\bigwedge_{k=1}^m \psi_k \impl \psi'$ and a
  substitution $\theta \from \card{\vv{y}} \klTo \emptyset$ such that
  $y = \foSem{\psi'}[\theta]$ and for all $1 \leq k \leq m$ we have
  $\foSem{\psi_k}[\theta] \in I$.
  We show that by case analysis on the proof of $B$ and induction on $w$.

  As discussed in the outline of proof $\pi$, only the rules \Dec{} or $\DecG$
  are  applicable to an atomic goal, and there are 3 possibilities of choosing
  a formula via these: it may be a program clause from $P$, the hypothesis $hp$
  or $\varphi$.
  When one of these options is taken in a proof,
  we will say $B$ is \emph{resolved against} a clause from $P$, $hp$ or
  $\varphi$, respectively.
  Moreover, chosing an atomic clause in $P$ gives us the base case, for which the proof has
  been given already.
  The remaining cases are:

  \begin{enumerate}[label=\alph*)]

  \item Suppose $B$ is resolved against a clause
    $\all{\vv{y}}\bigwedge_{k=1}^m B_k \impl \psi'$ in $P$, that is, we have
    $\psi' \ssubst*{\gamma} \conv B$ for some substitution $\gamma$.
    Note that if we define $\theta = \Theta(w) \klComp \foSem{\gamma}$, then
    \begin{equation*}
      y
      = \foSem{B}\ssubst*{\Theta(w)}
      = \foSem{\psi' \ssubst*{\gamma}}\ssubst*{\Theta(w)}
      = \foSem{\psi'}\ssubst*{\Theta(w) \klComp \foSem{\gamma}}
      = \foSem{\psi'}\ssubst*{\theta}.
    \end{equation*}
    Since for all $1 \leq k \leq m$ the atom $B_k[\gamma]$ must have a proof
    somewhere in $\pi$, we have $B_k[\gamma] \in D$.
    Thus, also $\foSem{B_k}\ssubst*{\theta} \in I_2$ and so
    $y \in \ProgOp{P}(I)$ with the initial program clause and the substitution
    $\theta$.
  \item If $B$ is resolved against $\varphi$,
    then this can only occur above the level $l$ in the proof tree $\pi$.
    By the already given
    schematic analysis of $\pi$,
    we also know that this case
    requires that $B \conv A_0 [\gamma_r]$ for some $\gamma_r$, and moreover
    this substitution  is already incorporated in the construction of $\Theta$,
    as $\gamma_r$ gives rise to an agent $\theta_r$.
    Thus, we have
    \begin{equation*}
      y
      = \foSem{B}  \ssubst*{\Theta(w)}
      = \foSem{A_0 [\gamma_r]}  \ssubst*{\Theta(w)}
      = \foSem{A_0^C} \ssubst*{\Theta(w) \klComp \theta_r}
      = \foSem{A_0^C} \ssubst*{\Theta(w:r)},
    \end{equation*}
    the latter equality follows from the definition of $\Theta(w:r)$ and of
    substitution composition.

    Note that $A_0^C$ was initially, below level $l$, resolved against a
    program clause $\all{\vv{y}}\bigwedge_{k=1}^m \psi_k \impl \psi'$
    with $A_0^C \conv \psi'[\gamma]$ for some substitution $\gamma$.
    Thus, by putting $\theta = \Theta(w:r) \klComp \foSem{\gamma}$ we further have
    \begin{equation*}
      y
      = \foSem{\psi'[\gamma]} \ssubst*{\Theta(w:r)}
      = \foSem{\psi'} \ssubst*{\Theta(w:r) \klComp \foSem{\gamma}}
      = \foSem{\psi'} \ssubst*{\theta}.
    \end{equation*}
    Since each $\psi_k[\gamma]$ is an atom in $\pi$, we also have that
    $\psi_k[\gamma] \in D$ and thus
    $\foSem{\psi_k[\gamma]}\ssubst*{\Theta(w:r)} \in I_2$.
    This yields in turn that $\foSem{\psi_k}\ssubst*{\theta} \in I_2$.
    Putting this all together, we have that
    $y \in \ProgOp{P}(I)$ by using the initial program clause and the substitution $\theta$.
  \item $B$ is resolved against $hp$.
    Once again, this can only occur
    above the level $l$ in the proof tree $\pi$.
    Since $hp = A_1 \conj\dotsm \conj A_n\ssubst{\vv{c}/\vv{x}}$,
    we know that this case requires that
    $B \conv A_k\ssubst{\vv{c}/\vv{x}}$, for some $A_k$  in $A_1 \conj\dotsm \conj A_n$.
    Thus,
    \begin{equation*}
      y
      = \foSem{B}  \ssubst*{\Theta(w)}
      = \foSem{A_k\ssubst{\vv{c}/\vv{x}}} \ssubst*{\Theta(w)}
      = \foSem{A_k^C}  \ssubst*{\Theta(w)}.
    \end{equation*}
    We proceed now by distinguishing the cases for $w$ and invoking the induction hypothesis.
    \begin{enumerate}[label=\roman*)]
    \item $w = \varepsilon$. In this case, $y = \foSem{A^C_k}  \ssubst*{\Theta(w)}$
      is already in $I_1$, and because $I_1$ is an invariant for
      $\setDef*{\foSem{A^C_k}  \ssubst*{\Theta(\varepsilon)}}{1\leq k \leq n}$,
      we have $y \in  \ProgOp{P}(I_1) $ and hence $y \in  \ProgOp{P}(I)$.
    \item $w = v:i$ for some $i \in S$ and $v \in \lists{S}$.
      Let $\gamma_i$ be the syntactic substitution from which the agent $\theta_i$
      arises.
      Then we have that
      $y
      = \foSem{A_k^C}  \ssubst*{\Theta(v) \klComp \theta_i}
      = \foSem{A_k[\gamma_i]} \ssubst*{\Theta(v)}$.
      Since $\theta_i$ is an agent, there must be a use of $\varphi$ in $\pi$ with the
      substitution $\gamma_i$.
      Thus, the premises of $\varphi$, and in particular $A_k[\gamma_i]$, must all have a
      proof in $\pi$.
      From this we obtain that $A_k[\gamma_i] \in D$ and so
      $\foSem{A_k[\gamma_i]} \ssubst*{\Theta(v)} \in I_2$.
      By induction, we obtain now, from any of the cases in this proof, that
      $\foSem{A_k[\gamma_i]} \ssubst*{\Theta(v)} \in \ProgOp{P}(I)$,
      as required.
    \end{enumerate}
  \end{enumerate}
  This induction and case analysis shows that for any $y \in I$, we have
  $y \in \ProgOp{P}(I)$.
  Thus, $I$ is an invariant.
\end{appendixproof}

Once we have \lemRef*{cup-soundness-invariant} the following soundness
theorem is easily proven.
\begin{theoremrep}
  \label{thm:cup-soundness}
  If $\varphi$ is an $H^g$-formula and $\SequentCUP[\TSig][P][\varphi]$, then
  $\sem{\varphi}_{\Model[P]} = \top$.
\end{theoremrep}
\begin{appendixproof}
  We construct an invariant $I$ for any instance of $\SequentCUP[\TSig][P][\varphi]$,
  as per \lemRef*{cup-soundness-invariant}.
  Since $I \subseteq  \ProgOp{P}(I)$, we obtain $\sem{\varphi}_{\Model[P]} = \top$.
\end{appendixproof}

Finally, we show that extending logic programs with coinductively proven
lemmas is sound.
This follows easily by coinduction.
\begin{theoremrep}
  \label{thm:cut theorem infi model}
  Let $\varphi$ be an $H^g$-formula.\footnote{I removed the condition now}
  Then $\SequentCUP[\TSig][P][\varphi]$ implies
  $\Model[P \cup \set{\varphi}] = \Model[P]$,
  that is, $P \cup \set{\varphi}$ is a conservative
  extension of $P$ with respect to the Herbrand model.
\end{theoremrep}
\begin{appendixproof}
  Suppose $\varphi$ is an $H^g$-formula of the shape
  $\all{\vv{x}} \psi_1 \impl \psi_2$.
  Let $\Model = \Model[P]$ and $\Model' = \Model[P, \varphi]$.
  First, we note that
  $\ProgOp{P, \varphi} = \ProgOp{P} \lub \ProgOp{\varphi}$,
  where $\lub$ is the point-wise union.
  This gives us immediately that
  $\Model \subseteq \ProgOp{P}(\Model) \subseteq \ProgOp{P,\varphi}(\Model)$
  and thus $\Model \subseteq \Model'$ by coinduction.
  For the other direction, that is $\Model' \subseteq \Model$ one uses soundness
  and coinduction as follows.
  We have
  \begin{align*}
    \Model'
    & = \ProgOp{P,\varphi}(\Model')
    \tag{by definition of $\Model'$} \\
    & = \ProgOp{P}(\Model')
    \cup \setDef{\foSem{\psi_2}[\theta]}{
      \theta \from \vv{x} \to \coterms{\Sigma},
      \foSem{\psi_1}[\theta] \in \Model'}
    \tag{by definition of $\ProgOp{\varphi}$} \\
    & \subseteq \ProgOp{P}(\Model')
    \cup \setDef{\foSem{\psi_2}[\theta]}{
      \theta \from \vv{x} \to \coterms{\Sigma},
      \foSem{\psi_1}[\theta] \in \Model}
    \tag{by assumption} \\
    & \subseteq \ProgOp{P}(\Model') \cup \Model
    \tag{by $\SequentCUP[\TSig][P][\varphi]$ and \thmRef*{cup-soundness}}
  \end{align*}
  Now we use the soundness of a so-called
  up-to-technique~\cite{Pous-UpToComplLattices}.
  Specifically, let $F$ be the monotone map on $\Interp$ given by
  $F(I) = I \cup \Model$.
  Then $F$ is \emph{$\ProgOp{P}$-compatible}, that is,
  $F \comp \ProgOp{P} \sqsubseteq \ProgOp{P} \comp F$ because
  $\Model$ is the largest $\ProgOp{P}$ fixed point.
  It follows~\cite{Pous-UpToComplLattices} for every $I \in \Interp$ that
  whenever $I \subseteq \ProgOp{P}(F(I))$ then $I \subseteq \Model$.
  By the above calculation, we have that
  $\Model' \subseteq \ProgOp{P}(F(\Model'))$.
  Thus, $\Model' \subseteq \Model$ as we wanted to show.
  Altogether, this gives us that
  $\Model[P] = \Model[P,\varphi]$.
  \qedhere
\end{appendixproof}

As a corollary we obtain that, if there is a proof for
$\SequentCUP[\TSig][P][\varphi]$, then a proof for
$\SequentCUP[\TSig][P,\varphi][\psi]$ is sound with respect to  $\Model[P]$.
Indeed, by \thmRef*{cut theorem infi model} we have that
$\Model[P] = \Model[P \cup \set{\varphi}]$ and by \thmRef*{cup-soundness}
that $\SequentCUP[\TSig][P,\varphi][\psi]$ is sound with respect to
$\Model[P \cup \set{\varphi}]$.
Thus, the proof of $\SequentCUP[\TSig][P,\varphi][\psi]$ is also sound  with
respect to $\Model[P]$.
We use this property implicitely in our running examples, and refer the reader
to~\cite{BKL18,KL18-2,KL18} for proofs, further examples and discussion.


\subsection{Soundness of $\iFOLm$ over Herbrand Models}
\label{sec:soundness-loeb-herbrand}

In this section, we demonstrate how the logic $\iFOLm$ can be interpreted over
Herbrand models.
Recall that we obtained a fixed point model from the monotone map
$\ProgOp{P}$ on interpretations.
In  what follows, it is crucial that we construct the greatest fixed point
of $\ProgOp{P}$ by iteration,
c.f.~\cite{Adamek03:FinCoalgContinuousFunctors,%
  Cousot1979:ConstructiveFixedPoint,Worrell05:FinalSeqFinitary}:
Let $\Ord$ be the class of all ordinals equipped with their
(well-founded) order.
We denote by $\op{\Ord}$ the class of ordinals with their reversed order
and define a monotone function
$\finChain{\ProgOp{P}} \from \op{\Ord} \to \Interp$, where we write the
argument ordinal in the subscript, by
\begin{equation*}
  \parens[\big]{\finChain{\ProgOp{P}}}_{\alpha}
  = \bigcap\nolimits_{\beta < \alpha}
  \ProgOp{P}\parens[\big]{\finChain{\ProgOp{P}}_{\beta}}.
\end{equation*}
Note that this definition is well-defined because $<$ is well-founded
and because $\ProgOp{P}$ is monotone, see~\cite{Basold18:BreakingTheLoop}.
Since $\Interp$ is a complete lattice, there is an ordinal $\alpha$ such that
$\finChain{\ProgOp{P}}_{\alpha}
= \ProgOp{P}\parens[\big]{\finChain{\ProgOp{P}}_{\alpha}}$,
at which point $\finChain{\ProgOp{P}}_{\alpha}$ is the largest fixed point
$\Model[P]$ of $\ProgOp{P}$.
In what follows, we will utilise this construction to give semantics to
$\iFOLm$.

The fibration $\predF \from \PredC \to \SetC$ gives rise to another
fibration as follows.
We let $\ch{\PredC}$ be the category of functors (monotone maps)
with fixed predicate domain:
\begin{equation*}
  \ch{\PredC} = \CatDescr{
    u \from \op{\Ord} \to \PredC,
    \text{such that }
    \predF \comp u \text{ is constant}
  }{
    u \to v \text{ are natural transformations } f \from u \natTo v, \\
    & \text{such that }
    \predF f \from \predF \comp u \natTo \predF \comp v
    \text{ is the identity}
  }
\end{equation*}
The fibration $\ch{\predF} \from \PredC \to \SetC$ is defined by
evaluation at any ordinal (here $0$), i.e.
by $\ch{\predF}(u) = \predF(u(0))$ and $\ch{\predF}(f) = (\predF f)_0$,
and reindexing along $f \from X \to Y$ by applying the reindexing of $\predF$
point-wise, i.e. by $\reidx[\#]{f}(u)_\alpha = \reidx{f}(u_\alpha)$.

Note that there is a (full) embedding $K \from \PredC \to \ch{\PredC}$ that is
given by $K(X, P) = (X, \ch{P})$ with $\ch{P}_\alpha = P$.
One can show~\cite{Basold18:BreakingTheLoop} that $\ch{\predF}$ is again a
first-order fibration and that it models the later modality, as in the
following theorem.
\begin{theorem}
  \label{thm:chains-fo-fibrations-with-later}
  The fibration $\ch{\predF}$ is a first-order fibration.
  If necessary, we denote the first-order connectives by
  $\predCTop$, $\predCConj$ etc. to distinguish them from those in $\PredC$.
  Otherwise, we drop the dots.
  Finite (co)products and quantifiers are given point-wise, while
  for $X \in \SetC$ and $u,v \in \ch{\PredC}_X$ exponents are given by
  \begin{equation*}
    (v \predCImpl u)_{\alpha} =
    \bigcap\nolimits_{\beta \leq \alpha} (v_\beta \predImpl u_\beta).
  \end{equation*}
  There is a fibred functor
  $\later \from \ch{\PredC} \to \ch{\PredC}$
  with $\ch{\pi} \comp \later = \ch{\pi}$
  given on objects by
  \begin{equation*}
    (\later u)_{\alpha} = \bigcap\nolimits_{\beta < \alpha} u_\beta
  \end{equation*}
  and a natural transformation $\nextOp \from \Id \natTo \later$
  from the identity functor to $\later$.
  The functor $\later$ preserves reindexing, products, exponents and
  universal quantification:
  $\later(\reidx[\#]{f}u) = \reidx[\#]{f}(\later u)$,
  $\later(u \conj v) = \later u \conj \later v$,
  $\later (u^v) \to (\later u)^{\later v}$,
  $\later \parens*{\ch{\forall}_n u} = \ch{\forall}_n \parens{\later u}$.
  Finally, for all $X \in \SetC$ and $u \in \ch{\PredC}_X$, there is
  $\lob \from (\later u \predCImpl u) \to u$ in $\ch{\PredC}_X$.
\end{theorem}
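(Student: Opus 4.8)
The plan is to realise each fibre $\ch{\PredC}_X$ concretely as the poset of \emph{descending $\op{\Ord}$-chains of subsets of $X$}, that is, families $(u_\alpha)_{\alpha \in \Ord}$ with $u_\alpha \subseteq X$ and $u_\beta \subseteq u_\alpha$ whenever $\alpha \leq \beta$, ordered by pointwise inclusion. (Unwinding the definition of $\ch{\PredC}$: an object over $X$ is a functor $\op{\Ord} \to \PredC$ whose image under $\predF$ is constantly $X$, which forces the transition maps to be the inclusions $u_\beta \to u_\alpha$; a morphism over $\id_X$ is a pointwise inclusion, naturality being automatic.) This poset is a complete lattice. Almost all of the first-order structure is then inherited pointwise from $\predF$: $\predCTop$ is the constant chain $X$, $u \predCConj v$ and the coproduct are taken degreewise, distributivity is inherited, and $\ch{\forall}_n$, $\ch{\exists}_n$ are obtained by applying $\forall_n$, $\exists_n$ of $\predF$ in each degree. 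To check these land in $\ch{\PredC}$ one only needs that $\forall_n$, $\exists_n$ are monotone, so that they send descending chains to descending chains; and to see that $\ch{\predF}$ is a split fibration with reindexing $\reidx[\#]{f}(u)_\alpha = \reidx{f}(u_\alpha)$, one uses that preimages commute with all the pointwise operations. The Beck--Chevalley and Frobenius conditions descend degreewise from $\predF$, and, as the footnote records, we do not dwell on them.

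The one genuinely non-pointwise piece of first-order structure is the exponent. The naive chain $\alpha \mapsto (v_\alpha \predImpl u_\alpha)$ is in general neither descending nor a right adjoint to $\predCConj$, which is exactly why one must set $(v \predCImpl u)_\alpha = \bigcap_{\beta \leq \alpha}(v_\beta \predImpl u_\beta)$. First I would check this is descending, which is immediate since $\alpha \leq \alpha'$ only enlarges the index set. Then I would verify the adjunction $w \predCConj v \leq u \iff w \leq (v \predCImpl u)$: the forward direction is the instance $\beta = \alpha$, and the backward direction uses that $w$ is descending, so $w_\alpha \cap v_\beta \subseteq w_\beta \cap v_\beta \subseteq u_\beta$ for all $\beta \leq \alpha$. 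Finally, $\reidx{f}$ commutes with this formula because preimage commutes with $\cap$ and with $\predImpl$, so the exponent is preserved by reindexing; with $\SetC$ Cartesian closed this makes $\ch{\predF}$ a first-order $\lambda$-fibration.

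For the modality, I would define $(\later u)_\alpha = \bigcap_{\beta < \alpha} u_\beta$; this is again descending, it does not touch the underlying set (in particular $(\later u)_0 = \bigcap_{\emptyset} = X$, so $\ch{\predF} \comp \later = \ch{\predF}$), and it acts on morphisms componentwise by restriction. The unit $\nextOp_u \from u \to \later u$ is the family of inclusions $u_\alpha \subseteq \bigcap_{\beta < \alpha} u_\beta$, which holds \emph{precisely} because $u$ is descending, and naturality is automatic since everything lies over identities. Preservation of reindexing, finite products and universal quantification all reduce to the facts that preimage and $\forall_n$ commute with arbitrary intersections. For the lax comparison $\later(v \predCImpl u) \to (\later v \predCImpl \later u)$ I would unfold both sides: the left side at $\alpha$ is $\bigcap_{\gamma < \alpha}(v_\gamma \predImpl u_\gamma)$ while the right side is $\bigcap_{\delta \leq \alpha}\bigl((\bigcap_{\gamma < \delta} v_\gamma) \predImpl (\bigcap_{\gamma < \delta} u_\gamma)\bigr)$, and a short membership chase shows that an element of the former lies in the latter.

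The remaining and most delicate point is the Löb morphism $\lob \from (\later u \predCImpl u) \to u$, which amounts to $(\later u \predCImpl u)_\alpha \subseteq u_\alpha$ for every $\alpha$. Here I would argue by transfinite induction on $\alpha$: given $x \in (\later u \predCImpl u)_\alpha = \bigcap_{\beta \leq \alpha}\bigl((\bigcap_{\gamma < \beta} u_\gamma) \predImpl u_\beta\bigr)$ together with the induction hypothesis that $x \in u_\gamma$ for all $\gamma < \alpha$, one gets $x \in \bigcap_{\gamma < \alpha} u_\gamma = (\later u)_\alpha$, and then the factor at $\beta = \alpha$ yields $x \in u_\alpha$. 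It is exactly the \textbf{well-foundedness} of $\op{\Ord}$ that licenses this induction, and this is the heart of the whole theorem; everything else is routine lattice-theoretic bookkeeping, the only mild nuisance being the careful reindexing of the nested intersections when comparing $\later$ with the exponent. All of this is carried out in~\cite{Basold18:BreakingTheLoop}, so the proof may alternatively simply defer to that reference.
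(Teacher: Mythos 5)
Your proposal is correct, and it in fact supplies more than the paper does: the paper states this theorem without proof, deferring entirely to the cited reference~\cite{Basold18:BreakingTheLoop} (the appendix only adds an informal picture of how $\later$ shifts a chain). Your argument is the expected one from that reference: identify the fibre $\ch{\PredC}_X$ with descending chains of subsets of $X$, inherit all point-wise structure from $\predF$, repair the exponent with the intersection $\bigcap_{\beta \leq \alpha}(v_\beta \predImpl u_\beta)$ and check the adjunction against $\predCConj$, define $\later$ by intersection over strictly smaller ordinals with $\nextOp$ given by descendingness, reduce the preservation statements to commutation of preimage, $\cap$ and $\forall$ with arbitrary intersections, and obtain $\lob$ by well-founded (transfinite) induction. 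That last point — well-foundedness of the ordinal order as the source of the Löb rule — is indeed the essential content, and you isolate it correctly.

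Two cosmetic remarks. In the adjunction check you have the directions interchanged: it is the implication $w \predCConj v \leq u \Rightarrow w \leq (v \predCImpl u)$ that needs $w$ descending (via $w_\alpha \cap v_\beta \subseteq w_\beta \cap v_\beta \subseteq u_\beta$ for $\beta \leq \alpha$), while the converse is the instance $\beta = \alpha$; both ingredients appear in your text, just under the wrong labels. In the Löb argument, the induction is cleanest if stated as: fix $x \in (\later u \predCImpl u)_\alpha$ and show by induction on $\beta \leq \alpha$ that $x \in u_\beta$, using that the component at each such $\beta$ is available because $\beta \leq \alpha$; as written, the hypothesis ``$x \in u_\gamma$ for all $\gamma < \alpha$'' silently uses that $x$ also lies in the lower components of the chain. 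Neither point affects correctness.
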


\begin{toappendix}
  Intuitively, the later modality shifts a given sequence by one position
  and concatenates it with the terminal object.
  This can be seen if we have a description ordinals through successor and
  limit ordinals.
  Given $\sigma \in \ch{\PredC}_X$, we can visualise the beginning of
  $\sigma$ and $\later \sigma$ as follows.
  \begin{equation*}
    \begin{tikzcd}[arrows={phantom}, labels={above}]
      \sigma:
      & \sigma_0
      & \lar{\supseteq} \sigma_1
      & \lar{\supseteq} \sigma_2
      & \lar{\supseteq} \sigma_3
      & \lar{\supseteq} \sigma_4
      & \lar{\supseteq} \dotsb
      \\
      \later \sigma:
      & X \ar[u, "\supseteq"{rotate=90}]
      & \lar{\supseteq} \ar[u, "\supseteq"{rotate=90}] \sigma_0
      & \lar{\supseteq} \ar[u, "\supseteq"{rotate=90}] \sigma_1
      & \lar{\supseteq} \ar[u, "\supseteq"{rotate=90}] \sigma_2
      & \lar{\supseteq} \ar[u, "\supseteq"{rotate=90}] \sigma_3
      & \lar{\supseteq} \dotsb
    \end{tikzcd}
  \end{equation*}
\end{toappendix}

Using the above theorem, we can extend the interpretation of formulae to
$\iFOLm$ as follows.
Let $u \from \op{\Ord} \to \Interp$ be a descending sequence of interpretations.
As before, we define the restriction of $u$ to a predicate symbol $p \in \PSig$
by
$\parens[\big]{\I[u]{p}}_{\alpha} = \I[u_\alpha]{p}
= \setDef[\big]{\vv{t}}{p\parens[\big]{\vv{t}} \in u_{\alpha}}$.
The semantics of formulae in $\iFOLm$ as objects in $\ch{\PredC}$ is given
by the following iterative definition.

\begin{align*}
  \SwapAboveDisplaySkip
  \sem{\validForm{p \> \vv{M}}}_u
  & = \reidx[\#]{\parens*{\vv{\sem{M}}}} (\I[u]{p})
  \\
  \sem{\validForm{\top}}_u
  & = \predCTop_{\sem{\Gamma}}
  \\
  \sem{\validForm{\varphi \binbox \psi}}_u
  & = \sem{\validForm{\varphi}}_u \binbox \sem{\validForm{\psi}}_u
  & \Box \in \set{\conj, \disj, \impl}
  \\
  \sem{\validForm{\bind{Q}{x : \tau} \varphi}}_u
  & = Q_{\sem{\Gamma}, \sem{\tau}} \> \sem{\validForm[\Gamma, x : \tau]{\varphi}}_u
  & Q \in \set{\forall, \exists}
  \\
  \sem{\validForm{\later \varphi}}_u
  & = \later \sem{\validForm{\varphi}}_u
\end{align*}

The following lemma is the analogue of \lemRef*{formula-interpret-resp-typing}
for the interpretation of formulae without the later modality.
\begin{lemmarep}
  \label{lem:loeb-formula-interpret-resp-typing}
  The mapping $\sem{-}_u$ is a well-defined map from formulae in $\iFOLm$ to
  sequences of predicates, such that $\validForm{\varphi}$ implies
  $\sem{\varphi}_u \in \ch{\PredC}_{\sem{\Gamma}}$.
\end{lemmarep}
\begin{appendixproof}
  Immediate by induction on $\varphi$.
\end{appendixproof}

\begin{lemmarep}
  \label{lem:loeb-sound}
  All rules of $\iFOLm$ are sound with respect to the interpretation
  $\sem{-}_u$ of formulae in $\ch{\PredC}$, that is,
  if \ $\inferFOL{\varphi}$, then
  $\parens[\big]{
    \Conj[\psi \in \Delta] \sem{\psi}_u \predCImpl \sem{\varphi}_u
  } = \predCTop$.
  In particular, $\inferFOL[]{\varphi}$ implies $\sem{\varphi}_u = \predCTop$.
\end{lemmarep}
\begin{appendixproof}
  The soundness for the rules of first-order logic in
  \figRef*{rules-std-connectives} is standard for the given interpretation over
  a first-order fibration as in \thmRef*{chains-fo-fibrations-with-later},
  see \cite[Sec.~4.3]{Jacobs1999-CLTT}.
  Soundness of the rules for the rules of the later modality in
  \figRef*{rules-later} follows from the existence of the morphisms $\nextOp$
  and $\lob$, and functoriality of $\later$ that were proved in
  \thmRef*{chains-fo-fibrations-with-later},
  cf.~\cite[Sec.~5.2]{B18} and~\cite{Basold18:BreakingTheLoop}.
\end{appendixproof}

The following lemma shows that the guarding of a set of formulae is valid in
the chain model that they generate.
\begin{lemmarep}
  \label{lem:loeb-guarded-formulae-sound}
  If $\varphi$ is an $H$-formula in $P$,
  then
  $\sem{\guard{\varphi}}_{\finChain{\ProgOp{P}}} = \predCTop$.
\end{lemmarep}
\begin{appendixproof}
  Let $\varphi$ be an $H$-formula in $P$ of shape
  $\all{\vv{x} : \vv{\tau}}
  \Conj[i=1,\dotsc,n] p_i \; \vv{M}_i \impl q \; \vv{N}$
  and let $\Gamma$ be the context $\vv{x} : \vv{\tau}$.
  Our goal is to show that
  $\sem{\guard{\varphi}}_{\finChain{\ProgOp{P}}} = \predCTop$.
  First, we have by definition of the semantics for all
  $\alpha \in \Ord$ that
  \begin{align*}
    &
    \sem{\Conj*[i=1,\dotsc,n] \later (p_i \; \vv{M}_i)
      \impl q \; \vv{N}}_{\alpha}
    \\
    & =
    \bigcap\nolimits_{\beta < \alpha}
    \sem{\Conj*[i=1,\dotsc,n] \later (p_i \; \vv{M}_i)}_{\beta}
    \predImpl \sem{q \; \vv{N}}_{\beta}
    \\
    & =
    \bigcap\nolimits_{\beta < \alpha}
    \parens*{\bigcap\nolimits_{i=1,\dotsc,n}
      \sem{\later (p_i \; \vv{M}_i)}_{\beta}}
    \predImpl \sem{q \; \vv{N}}_{\beta}
    \\
    & =
    \bigcap\nolimits_{\beta < \alpha}
    \parens*{
      \bigcap\nolimits_{i=1,\dotsc,n}
      \bigcap\nolimits_{\gamma < \beta}
      \sem{p_i \; \vv{M}_i}_{\gamma}}
    \predImpl \sem{q \; \vv{N}}_{\beta}
    \\
    & =
    \begin{aligned}[t]
      \bigcap\nolimits_{\beta < \alpha}
      \big\{
      \sigma \in \sem{\Gamma} 
      \; \big| \;
      & \parens[\big]{\all{i} \all{\gamma < \beta}
        \vv{M}_i[\sigma] \in \finChain{\ProgOp{P}}_{\gamma}(p)}
      \\
      & \implies \vv{N}[\sigma] \in \finChain{\ProgOp{P}}_{\beta}(q)
      \big\}
    \end{aligned}
    \\
    & =
    \begin{aligned}[t]
      \bigcap\nolimits_{\beta < \alpha}
      \big\{
      \sigma \in \sem{\Gamma} 
      \; \big| \;
      & \parens[\big]{\all{i} \all{\gamma < \beta}
        \vv{M}_i[\sigma] \in \finChain{\ProgOp{P}}_{\gamma}(p)}
      \\
      & \implies \parens[\big]{
        \all{\gamma < \beta}
        \vv{N}[\sigma]
        \in \ProgOp{P}\parens[\big]{\finChain{\ProgOp{P}}}_{\gamma}(q)}
      \big\}
    \end{aligned}
  \end{align*}
  We intend to show now that this set is equal to $\predTop_{\sem{\Gamma}}$.
  Let $\sigma \in \sem{\Gamma}$, such that
  $\all{i} \all{\gamma < \beta}
  \vv{M}_i[\sigma] \in \finChain{\ProgOp{P}}_{\gamma}(p)$.
  We have to show that
  $\all{\gamma < \beta}
  \vv{N}[\sigma]
  \in \ProgOp{P}\parens[\big]{\finChain{\ProgOp{P}}}_{\gamma}(q)$.
  To this end, suppose $\gamma < \beta$.
  Then
  $\all{i} \vv{M}_i[\sigma] \in \finChain{\ProgOp{P}}_{\gamma}(p)$
  by assumption.
  By definition of $\ProgOp{P}$ we obtain
  $\vv{N}[\sigma]
  \in \ProgOp{P}\parens[\big]{\finChain{\ProgOp{P}}}_{\gamma}(q)$
  as required.
  Hence,
  $\sem{\Conj*[i=1,\dotsc,n] \later (p_i \; \vv{M}_i)
    \impl q \; \vv{N}} = \predCTop_{\sem{\Gamma}}$.
  But then
  $\sem{\guard{\varphi}}_{\finChain{\ProgOp{P}}} =
  \forall_{\Gamma} \predCTop_{\sem{\Gamma}} = \predCTop$.
\end{appendixproof}

Combining this with soundness from \lemRef*{loeb-sound}, we obtain that
provability in $\iFOLm$ relative to a logic program $P$ is sound for
the model of $P$.
\begin{theoremrep}
  \label{thm:loeb-sound-programs}
  For all logic programs $P$, if
  $\inferFOL[\guard{P}]{\varphi}$ then
  $\sem{\varphi}_{\finChain{\ProgOp{P}}} = \predCTop$.
\end{theoremrep}
\begin{appendixproof}
  Combine \lemRef*{loeb-sound} and \lemRef*{loeb-guarded-formulae-sound}.
\end{appendixproof}

The final result of this section is to show that the descending
chain model, which we used to interpret formulae of $\iFOLm$, is sound and
complete for the fixed point model, which we used to interpret the formulae
of coinductive uniform proofs.
This will be proved in \thmRef*{chains-sound} below.
The easiest way to prove this result is by establishing a functor
$\ch{\PredC} \to \PredC$ that maps the chain $\finChain{\ProgOp{P}}$
to the model $\Model[P]$, and that preserves and reflects truth of first-order
formulae (\propRef*{limit-functor}).
We will phrase the preservation of truth of first-order formulae by a functor
by appealing to the following notion of fibrations maps,
cf.~\cite[Def.4.3.1]{Jacobs1999-CLTT}.
\begin{definition}
  \label{def:map-fo-fibrations}
  Let $p \from \TCat \to \BCat$ and $q \from \Cat{D} \to \Cat{A}$ be fibrations.
  A \emph{fibration map} $p \to q$ is a pair
  $(F \from \TCat \to \Cat{D}, G \from \BCat \to \Cat{A})$ of functors,
  s.t. $q \comp F = G \comp p$
  and $F$ preserves Cartesian morphisms: if $f \from X \to Y$ in $\TCat$ is
  Cartesian over $p(f)$, then $F(f)$ is Cartesian over $G(p(f))$.
\end{definition}

Let us now construct a first-order $\lambda$-fibration map
$\ch{\PredC} \to \PredC$.
We note that since every fibre of the predicate fibration is a complete lattice,
for every chain $u \in \ch{\PredC}_X$ there exists an ordinal $\alpha$ at which
$u$ stabilises.
This means that there is a limit $\lim u$ of $u$ in $\PredC_X$,
which is the largest subset of $X$, such that
$\all{\alpha} \lim u \subseteq u_\alpha$.
This allows us to define a map $L \from \ch{\PredC} \to \PredC$ by
\begin{align*}
  & L(X, u) = \parens*{X, \lim u} \\
  & L(f \from (X, u) \to (Y, v)) = f.
\end{align*}

In the following proposition, we show that $L$ gives us the ability to
express first-order properties of limits equivalently through their
approximating chains.
This, in turn, provides soundness and completeness for the interpretation of
the logic $\iFOLm$ over descending chains with respect to the largest Herbrand
model.
\begin{propositionrep}
  \label{prop:limit-functor}
  The functor $L \from \ch{\PredC} \to \PredC$, as defined above, is a map of
  fibrations and preserves fibred (co)products, and existential and universal
  quantification.
  Furthermore, $L$ is right-adjoint to the embedding
  $K \from \PredC \to \ch{\PredC}$.
  Finally, for each $p \in \PSig$ and $u \in \ch{\PredC}_{\coHBase}$,
  we have $L\parens[\big]{\I[u]{p}} = \I[L(u)]{p}$.
\end{propositionrep}
\begin{appendixproof}
  First, we show that if $f \from (X, u) \to (Y, v)$, then $f$ is indeed a
  morphism $\parens*{X, \lim u} \to \parens*{Y, \lim v}$.
  This means that we have to show that $f(\lim u) \subseteq \lim v$.
  By the limit property, it suffices to show for all $\alpha \in \Ord$
  that $f(\lim u) \subseteq v_\alpha$:
  \begin{align*}
    f(\lim u)
    & \subseteq f(u_\alpha)
    \tag*{$\lim u \subseteq u_\alpha$ and image of $f$ monotone} \\
    & \subseteq v_\alpha
    \tag*{$f$ is morphism $(X, u) \to (Y, v)$}
  \end{align*}
  That $L$ preserves identities and composition is evident, as is the
  preservation if indices: $\ch{\pi} = \pi \comp L$.

  Next, we show that Cartesian morphisms are preserved as well.
  Let $f \from (X, u) \to (Y, v)$ be Cartesian in $\ch{\PredC}$,
  and suppose we are given $g$ and $h$ as in the lower triangle in the following
  diagram in $\SetC$ and $(Z,P)$ in $\PredC$.
  \begin{equation*}
    \begin{tikzpicture}[commutative diagrams/every diagram, node distance={0.5cm and 0.5cm}]
      \node (cPred) {
        \begin{tikzcd}
          (Z, w) \ar[dr,"g"] \ar[d,dashed,"h"] &       \\
          (X, u) \ar[r,"f"]                    & (Y, v)
        \end{tikzcd}
      };
      \node[below right=of cPred] (Set) {
        \begin{tikzcd}
          Z \ar[dr,"g"] \ar[d,"h"] &    \\
          X \rar{f}                & Y
        \end{tikzcd}
      };
      \node[above right=of Set] (Pred) {
        \begin{tikzcd}
          (Z, P) \ar[dr,"g"] \ar[d,dashed,"h"] & \\
          (X, \lim u) \rar{f}                  & (Y, \lim v)
        \end{tikzcd}
      };
      \path[commutative diagrams/.cd, every arrow, every label, mapsto]
      (cPred) edge node        {$L$}        (Pred)
              edge node [swap] {$\ch{\pi}$} (Set.north west)
      (Pred)  edge node        {$\pi$}      (Set.north east)
      ;
    \end{tikzpicture}
  \end{equation*}
  We have to show that $h$ is a morphism $(Z,P) \to (X, \lim u)$ in $\PredC$.
  To that end, we define a constant chain $w \from \Ord \to \PredC_Z$ by
  $w_\alpha = P$.
  Note that $\lim w = P$, thus $L(Z, w) = (Z,P)$.
  Moreover, for all $\alpha \in \Ord$ we have that
  $g(w_\alpha) = g(Z) \subseteq \lim v$.
  Thus, $g(w_\alpha) \subseteq v_\alpha$ and $g$ is a morphism in $\ch{\PredC}$.
  Since $f$ is Cartesian, we obtain that $h$ is a morphism $(Z, w) \to (X,u)$
  in $\ch{\PredC}$, that is, for all $\alpha$,
  $h(P) = h(w_\alpha) \subseteq u_\alpha$.
  This gives us in turn that $h(P) \subseteq \lim u$, which means that
  $h$ is a morphism $(Z,P) \to (Y, \lim v)$ in $\PredC$.

  Showing that $L$ is preserves coproducts and existential quantifiers is somewhat nasty,
  while products and universal quantification are straightforward.
  First, we prove that conjunction is preserved, that is, we want to prove
  that $\lim (u \predCDisj v) = \lim u \disj \lim v$.
  We note now that, because $u$ and $v$ are descending, that there are ordinals
  $\alpha, \beta, \gamma$ such that
  $\lim (u \predCDisj v) = (u \predCDisj v)_\gamma$ and
  $\lim u \disj \lim v = u_\alpha \disj v_\beta$.
  Let now $\gamma' = \alpha \lub \beta \lub \gamma$ be the least upper bound of
  these ordinals.
  Then we have by the above assumptions that
  \begin{align*}
    \lim (u \predCDisj v)
    & = (u \predCDisj v)_\gamma \\
    & = (u \predCDisj v)_{\gamma'}
    \tag*{Descending chains}
    \\
    & = u_{\gamma'} \disj v_{\gamma'}
    \tag*{Point-wise def. of $\predCDisj$}
    \\
    & = u_{\alpha} \disj v_{\beta}
    \tag*{Descending chains}
    \\
    & = \lim u \disj \lim v.
  \end{align*}
  Thus, $L\parens[\big]{(X, u) \predCDisj (X, v)} = L(X, u) \disj L(X, v)$
  as desired.

  Similarly, to prove $\lim (\predCExists_{X,Y} \sigma) = \exists_{X,Y} (\lim \sigma)$,
  we let $\beta$ be such that $\lim \sigma = \sigma_\beta$.
  The inclusion $\exists_{X,Y} (\lim \sigma) \subseteq L(\predCExists_{X,Y} \sigma)$
  is, as usual, unconditionally true.
  For the other direction, we have
  \begin{align*}
    x \in L(\predCExists_{X,Y} \sigma)
    & \iff x \in \lim_{\alpha} \setDef{x \in X}{\exist{y \in Y} (x,y) \in \sigma_\alpha} \\
    & \iff \all{\alpha} \exist{y} (x,y) \in \sigma_\alpha \\
    & \implies \exist{y} (x,y) \in \sigma_\beta \\
    & \implies \exist{y} \all{\alpha} (x,y) \in \sigma_\alpha
    \tag*{$\sigma$ descending and stable at $\beta$} \\
    & \iff x \in \exists_{X,Y} (\lim \sigma).
  \end{align*}
  This proves that also existential quantification is preserved by $L$.

  Finally, to show that there is an adjunction $K \dashv L$, we have to show
  for all $(X, P) \in \PredC$ and $(Y, u) \in \ch{\PredC}$
  that there is a natural isomorphism
  $\Hom{\PredC}{(X, P)}{(Y, \lim u)}
  \cong \Hom{\ch{\PredC}}{(X, \ch{P})}{(Y, u)}$.
  This boils down to showing that for any map $f \from X \to Y$ we have
  $f(P) \subseteq \lim u \iff \all{\alpha} f(P) \subseteq u_\alpha$.
  In turn, this is immediately given by the limit property of $\lim u$.
\end{appendixproof}

We get from \propRef*{limit-functor} soundness of
$\finChain{\ProgOp{P}}$ for Herbrand models.
More precisely, if $\varphi$ is a goal formula that has only implication-free formulas
on the left of an implication (first-order goal), then its interpretation in the
coinductive Herbrand model is true if its interpretation over the chain approximation of the
Herbrand model is true.
\begin{theoremrep}
  \label{thm:chains-sound}
  If $\varphi$ is a first-order goal and
  $\sem{\varphi}_{\finChain{\ProgOp{P}}} = \predCTop$,
  then
  $\sem{\varphi}_{\Model[P]} = \top$.
\end{theoremrep}
\begin{appendixproof}
  First, we show for an implication-free $D$-formula $\psi$ that
  \begin{equation}
    \label{eq:limit-pres-semantics}
    L(\sem{\psi}_{\finChain{\ProgOp{P}}}) = \sem{\psi}_{\Model[P]}
  \end{equation}
  by induction on $\psi$ and using \propRef*{limit-functor} as follows.
  For atoms, we have that
  \begin{align*}
    L(\sem{p \; \vv{M}}_{\finChain{\ProgOp{P}}})
    & = L \parens*{
      \reidx[\#]{\sem{\vv{M}}}\parens[\big]{\I[\finChain{\ProgOp{P}}]{p}}
    }
    \\
    & = \reidx{\sem{\vv{M}}}\parens*{
      L\parens[\big]{\I[\finChain{\ProgOp{P}}]{p}}}
    \tag*{$L$ preserves reindexing}
    \\
    & = \reidx{\sem{\vv{M}}}\parens*{
      \I[L\parens[\big]{\finChain{\ProgOp{P}}}]{p}}
    \tag*{$L$ preserves restrictions}
    \\
    & = \reidx{\sem{\vv{M}}}\parens*{\I[\Model[P]]{p}}
    \tag*{$\Model[P]$ is limit of $\finChain{\ProgOp{P}}$}
    \\
    & = \sem{p \; \vv{M}}_{\Model[P]}.
  \end{align*}
  The cases for universal quantification and conjunction are given by using that $L$
  preserves these connectives (again \propRef*{limit-functor}).
  From this, we obtain for a first-order goal $\varphi$ that
  $L(\sem{\varphi}_{\finChain{\ProgOp{P}}}) \subseteq \sem{\varphi}_{\Model[P]}$
  by induction on $\varphi$ and using again \propRef*{limit-functor}.

  To show that the semantics over $\PredC$ and $\ch{\PredC}$ coincide, that is,
  that we have the following correspondence.
  \begin{equation*}
    \def\fCenter{\ = \ }
    \Axiom$\sem{\varphi}_{\finChain{\ProgOp{P}}} \fCenter \predCTop$
    \UnaryInf$\sem{\varphi}_{\Model[P]} \fCenter \top$
    \DisplayProof
  \end{equation*}
  Since any predicate is included in the maximal predicate $\top$, it suffices
  to show that there is a correspondence as in
  \begin{equation*}
    \def\fCenter{\ \to \ }
    \Axiom$\predCTop \fCenter \sem{\varphi}_{\finChain{\ProgOp{P}}}$
    \UnaryInf$\top \fCenter \sem{\varphi}_{\Model[P]}$
    \DisplayProof
  \end{equation*}
  Note that $\predCTop$ is given by the embedding $K(\top)$.
  Using \propRef*{limit-functor} and \eqRef{limit-pres-semantics} we obtain the
  desired correspondence as follows.
  \begin{equation*}
    \def\fCenter{\ \to \ }
    \Axiom$\predCTop = K(\top) \fCenter \sem{\varphi}_{\finChain{\ProgOp{P}}} \qquad\, \text{in } \ch{\PredC}$
    \doubleLine
    \UnaryInf$\phantom{\predCTop = K(\top)} \top \fCenter L\parens[\big]{\sem{\varphi}_{\finChain{\ProgOp{P}}}} \quad\text{in } \PredC$
    \UnaryInf$\top \fCenter \sem{\varphi}_{\Model[P]}$
    \DisplayProof
  \end{equation*}
  This concludes the proof of soundness for first-order goals
  with respect to the Herbrand model.
\end{appendixproof}

\section{Conclusion, Related Work and the Future}
\label{sec:conclusion}

In this paper, we provided a comprehensive theory of resolution in
coinductive Horn-clause theories and coinductive logic programs.
This theory comprises of a uniform proof system that features a form
of guarded recursion and that provides operational semantics for proofs
of coinductive predicates.
Further, we showed how to translate proofs in this system into proofs
for an extension of intuitionistic FOL with guarded recursion, and we
provided sound semantics for both proof systems in terms of coinductive
Herbrand models.
The Herbrand models and semantics were thereby presented in a modern style
that utilises coalgebras and fibrations to provide a conceptual view
on the semantics.

\paragraph*{Related Work.}

It may be surprising that automated
\emph{proof search for coinductive predicates} in first-order logic does not
have a coherent and comprehensive theory, even after
three decades~\cite{Aczel88,Park81:AutomataInfSeq},
despite all the attention that it received as
programming~\cite{AbelPTS13,Capretta05:GeneralRecCoind,Hagino-Dialg,%
  Howard96:IndCoindPointedTypes}
and proof~\cite{Dax06:ProofSysLinearTimeMuCalc,%
  EndrullisHHP015,GieslABEFFHOPSS17,%
  Gimenez98,Hur13:ParameterizedCoind,Niwinski96:GamesMuCalc,%
  Rutten00:UniversalCoalgebra,Sangiorgi2011:IntroCoind,%
  Santocanale02:CircProofs,Santocanale02:muBicompleteParity}
method.
The work that comes close to algorithmic proof search is the system
CIRC~\cite{RosoLucanu09:CircCoinduction}, but it cannot handle general
coinductive predicates and corecursive programming.
Inductive and coinductive data types are also being added to SMT
solvers~\cite{BPR18,ReynoldsK15}.
However, both CIRC and SMT solving are inherently based on classical logic
and are therefore not suited to situations where proof objects are relevant,
like programming, type class inference or (dependent) type theory.
Moreover, the proposed solutions, just like those
in~\cite{GuptaBMSM07,SimonBMG07} can only deal with regular data, while our
approach also works for irregular data, as we saw in the $\fromP$-example.

This paper subsumes Haskell type class inference~\cite{Lammel:2005,FKS15} and
exposes that the inference presented in those papers
corresponds to coinductive proofs in $\cofohc$ and $\cohohh$.
Given that the proof systems proposed in this paper are
constructive and that uniform proofs provide proofs (type inhabitants) in normal
form, we could give a propositions-as-types interpretation to all eight
coinductive uniform proof systems.
This was done for $\cofohc$ and $\cohohh$ in~\cite{FKS15}, but we leave the
remaining cube from the introduction for future work.

\paragraph*{Future Work.}
\label{sec:future-work}

There are several directions that we wish to pursue in the future.
First, we know that CUP is incomplete for the presented models, as
it is intuitionistic and it lacks an admissible cut rule.
The first can be solved by moving to Kripke/Beth-models, as done
by Clouston and Goré~\cite{Clouston15:SequentCalcToposOfTrees} for
the propositional part of $\iFOLm$.
However, the admissible cut rule is more delicate.
To obtain such a rule one has to be able to prove several
propositions simultaneously by coinduction, as discussed at the end of
\secRef*{loeb-translation}.
In general, completeness of recursive proof systems depends largely
on the theory they are applied to,
see~\cite{Simpson17:CyclicArithmeticPeano}
and~\cite{Berardi17:MLTT-IndNotCyclic}.
However, techniques from cyclic proof
systems~\cite{Brotherston11:SequentCalcIndInfDecent,%
  Shamkanov14:CircProofsProvabilityLogic} may help.
We also aim to extend our ideas to other situations like
higher-order Horn clauses~\cite{HashimotoU15,BurnOR17}
and interactive proof
assistants~\cite{Agda:system,BaeldeCGMNTW14,Coq94,BlanchetteM0T17},
typed logic programming,
and logic programming that mix inductive and coinductive predicates.

\paragraph*{Acknowledgements.}
We would like to thank Damien Pous and the anonymous reviewers for their
valuable feedback.


\bibliographystyle{abbrvnat}
\bibliography{katya2,CorecUniformProof}

\end{document}